\documentclass[%
reprint,
 amsmath,amssymb,
 aps,
 pra,
 superscriptaddress
]{revtex4-1}
\usepackage{breakurl}
\usepackage{graphicx}
\usepackage{dcolumn}
\usepackage{bm}
\usepackage{hyperref}
\hypersetup{hidelinks,
            colorlinks=true,
            allcolors=blue}
\usepackage{amsthm}
\usepackage{yquant}
\useyquantlanguage{groups}
\usepackage{tikz}
\tikzset{
  dot/.style={
    circle, fill=black, inner sep=1pt, outer sep=0pt
  }
}
\usetikzlibrary{fit}

\usepackage[ruled,linesnumbered,noline]{algorithm2e}
\usepackage{diagbox}
\usepackage{pgfplots}    
\pgfplotsset{width=12cm,compat=1.13}
\usepackage{cases}
\newtheorem{definition}{Definition}

\newtheorem{lemma}{Lemma}

\usepackage{float}
\usepackage{subfigure}
\usepackage{placeins}
\usepackage{todonotes}
\usepackage{thm-restate}

\begin{document}

\preprint{APS/123-QED}

\title{Nearly Optimal T-Count for Symmetry-Based Quantum State Purification}
\author{Jialiang Tang}
\thanks{These authors contributed equally to this work.}
 \affiliation{%
 State Key Lab of Processors, Institute of Computing Technology, Chinese Academy of Sciences, Beijing 100190, China 
}%
\affiliation{%
 School of Computer Science and Technology, University of Chinese Academy of Sciences, Beijing 100049, China
}%
 \author{Junhong Nie}%
\thanks{These authors contributed equally to this work.}
\affiliation{Shandong University, Jinan 250100, China}

\author{Keming He}%
\affiliation{%
Thrust of Artificial Intelligence, Information Hub,
The Hong Kong University of Science and Technology (Guangzhou), Guangdong 511453, China.
}%

 \author{Xin Wang}%
  \email{felixxinwang@hkust-gz.edu.cn}
\affiliation{%
Thrust of Artificial Intelligence, Information Hub,
The Hong Kong University of Science and Technology (Guangzhou), Guangdong 511453, China.
}%
 \author{Xiaoming Sun}%
 \email{sunxiaoming@ict.ac.cn}
\affiliation{%
 State Key Lab of Processors, Institute of Computing Technology, Chinese Academy of Sciences, Beijing 100190, China 
}%
\affiliation{%
 School of Computer Science and Technology, University of Chinese Academy of Sciences, Beijing 100049, China
}%


\begin{abstract}
Protecting quantum information from noise is essential for reliable quantum computation and quantum memories. Quantum state purification addresses this challenge by processing multiple noisy copies of an unknown quantum state to produce an output state of higher fidelity. Here we establish nearly tight upper and lower bounds on the non-Clifford resources required to implement the symmetric-projection purification protocol. For \(n\) copies of a \(d\)-dimensional system, we construct a Clifford+T circuit that block-encodes the symmetric projector with operator-norm error at most \(\epsilon\) using \(\widetilde{O}(n)\) T gates, where the tilde-$O$ notation suppresses polylogarithmic factors in \(n\), \(d\), and \(1/\epsilon\). Conversely, using stabilizer nullity, we prove that any Clifford+T implementation of such a block encoding with error \(\epsilon<\frac{1}{2}\sqrt{1-1/n}\) requires at least \(2n-2\) T gates, even allowing arbitrary positive normalization and clean stabilizer ancillas. Taken together, these results establish the nearly optimal scaling of non-Clifford cost with the number of copies and uncover the fundamental resource requirements for coherent symmetry projection in quantum state purification.
\end{abstract}

\maketitle

\section{Introduction}\label{intro}
Quantum computing is currently transitioning from the Noisy Intermediate-Scale Quantum (NISQ) era to fault-tolerant quantum computing. However, noise, decoherence, and control imperfections remain ubiquitous in physical quantum devices, severely limiting the development of quantum computation and quantum information processing. To address these challenges, a variety of promising techniques have been proposed. One such approach is quantum error correction \cite{Shor1995,Steane1996,Fowler2012,Breuckmann2021,Campbell2017,Acharya2025,Bluvstein2024}, which protects quantum information by redundantly encoding logical states into enlarged physical Hilbert spaces. In parallel, quantum state purification for the stabilization of quantum computations, including quantum state storage, has emerged as an important complementary technique. This approach has also recently been referred to as quantum purity amplification in the literature.\cite{Barenco1997,Cirac1999,KeylWerner2001,Fiurasek2004,Fu2016,Fang2020NoGo,Liu2026NoUniversal,He2026NoGo,Childs2025,Yao2025,Li2025QPA,Zhao2026Distributed,He2026Magic,Guo2026Energy,He2026b}. 

Quantum state purification collectively processes multiple noisy copies of an unknown quantum state to produce fewer output states with higher fidelity \cite{Childs2025,Yao2025,Grier2025,Li2025QPA}, with important applications in the fields of quantum information and quantum communication. The most thoroughly researched case is $n$-to-1 purification, in which noisy states $\rho_{\psi}^{\otimes n} = \mathcal{N}(\psi^{\otimes n})$ under noisy channel $\mathcal{N}$ is turned to one purified output $\sigma_{\psi}$.

For $n$-to-1 state purification, one line of research aims to optimize the averaged fidelity of outputs. Cirac, Ekert and Macchiavello \cite{Cirac1999} gave the first solution for purifying identically prepared noisy states with optimal averaged fidelity, known as  Cirac-Ekert-Macchiavello (CEM) protocol. This work made explicit the common representation theory structure behind purification, mixed-state estimation, and cloning. Following that, Keyl and Werner \cite{KeylWerner2001} constructed optimal purification operations in a more general setting and characterized the attainable output rates. They showed that the rate depends crucially on the fidelity criterion: high fidelity of each individual output is substantially weaker than high global fidelity with an uncorrelated tensor power of the target. This distinction between one-site and all-site guarantees has remained important in later multi-output formulations \cite{Scharnhorst2026}. Also, subsequent experimental studies investigated its implementation on different physical platforms. Ricci et al. \cite{Ricci2004} first demonstrated a two-copy implementation for depolarized qubits using passive linear-optical elements. Hou et al. \cite{Hou2014} later realized the CEM protocol with explicit symmetrization in a nuclear-magnetic-resonance quantum information processor.

The other line studies probabilistic purification, which optimizes the conditional output fidelity together with the probability of obtaining the successful outcome \cite{Fiurasek2004,Fu2016,Childs2025,Yao2025,Li2025QPA}. In this setting, the purified state is accepted only when the purification succeeds, and failed cases are discarded. Fiur\'{a}\v{s}ek \cite{Fiurasek2004} formulated probabilistic purification as optimization over completely positive trace-non-increasing maps and derived the optimal probabilistic purification protocol. Yao et al. \cite{Yao2025} developed the symmetric projector protocol by block-encoding,
\begin{equation*}
    U_{n,d}
    =\begin{pmatrix}
        \Pi_{n,d} & *\\
        * & *
      \end{pmatrix}.
\end{equation*}
where
\begin{equation*}
    \Pi_{n,d}
    :=\frac{1}{n!}\sum_{\pi\in S_n}P_n(\pi)
\end{equation*}
is the uniform superposition of permutations on qudits with dimension $d$. Under depolarizing noise, it is proved to be jointly optimal in the trade-off of fidelity and probability for 2-to-1 state purification with arbitrary dimension, with numerical optimality in other cases \cite{Cirac1999, Fu2016,Yao2025}. Such protocol projects noisy states into the symmetric space to raise the fidelity, originating from SWAP test in early research \cite{Barenco1997, Buhrman2001}, which has been adopted broadly in state purification \cite{Fu2016, Childs2025,Grier2025}.

The scope of purification then has been expanded beyond $n$-to-1 settings. Li et al. \cite{Li2026Multiple} characterized the optimal channel for $n$-to-$m$ purification. Independently, Scharnhorst et al. \cite{Scharnhorst2026} proved dimension-independent sample bounds for $n$-to-$m$ purification using the combinatorics of random Young diagrams.

Despite these studies on the performance of purification protocols, the remaining question is how efficiently they can be implemented fault tolerantly. This efficiency can be measured by T-count (i.e. the number of T gates) \cite{Gottesman1997,GottesmanChuang1999,BravyiKitaev2005,JiangWang2023Nullity,Li2023}. In this work, we consider the $n$-to-1 probabilistic purification problem and focus on the fixed symmetric projector protocol $U_{n,d}$ due to its optimality under depolarizing noise. The synthesis objective is to approximate its successful Kraus operator $\Pi_{n,d}$ in operator norm while minimizing T-count.
On this problem, a natural mathematical description for such operator is Schur--Weyl duality, with the Quantum Schur Transform (QST) serving as the corresponding circuit primitive~\cite{Harrow2005Schur,Harrow2013,Bacon2006,watts2026}. After standard Clifford+T compilation, the BCH algorithm gives quantum circuits for QST with T-count polynomial in $n$, $d$, and $\log(1/\varepsilon)$~\cite{Bacon2006}. Krovi later proposed a dual, high-dimensional construction intended to improve the dependence on $d$~\cite{Krovi2019}. A subsequent analysis gave a corrected Krovi algorithm with T-count $\widetilde{O}(n^4)$ when $d>n$, as well as a compressed BCH construction with T-count $\widetilde{O}(\min\{n^5,nd^4\})$~\cite{Burchardt2025}. Recently, Brahmachari et al. showed that random pairwise SWAP tests can approximate unitary Schur sampling and attain the optimal purification protocol using $\widetilde{O}(\min\{n^4,nd^3\})$ tests~\cite{Brahmachari2025,Brahmachari2026}. However, Schur-Weyl duality is more general than required to implement $U_{n,d}$, since it resolves all Schur--Weyl sectors rather than only the completely symmetric sector. This motivates a more compact circuit of the symmetric projector $U_{n,d}$ for $n,d\geq 2$.

In this work, we give a Clifford+T synthesis scheme of the symmetric projector protocol $U_{n,d}$. For $n\geq2$ input copies of local dimension $d\geq2$, our construction approximates the symmetric projector in operator norm with T-count $\widetilde O(n)$, where the notation suppresses logarithmic factors in $n$, $d$, and $1/\epsilon$. Figure~\ref{fig:clean-partition} illustrates the permutation circuit structure in our framework.

\begin{samepage}
\begin{restatable}{theorem}{upperbound}\label{thm:symmetric-projection-synthesis}
For $n\geq2$, $d\geq2$, and $0<\epsilon<1/2$, there is a Clifford+T unitary $\widetilde U_{n,d}$ that block-encodes $\Pi_{n,d}$ with normalization one and operator-norm error at most $\epsilon$, using $m=O(n\log n)$ auxiliary qubits and T-count $T(\widetilde U_{n,d})=\widetilde O(n)$.
\end{restatable}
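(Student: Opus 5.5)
The plan is to realize $\Pi_{n,d}$ through a linear-combination-of-unitaries (LCU) block encoding over the symmetric group. Instead of summing over all $n!$ permutations directly, I will use the factorization of the uniform average over $S_n$ as a product of "coset" averages:
\begin{equation*}
\Pi_{n,d}=\frac{1}{n!}\sum_{\pi\in S_n}P_n(\pi)=A_n A_{n-1}\cdots A_2,\qquad A_k:=\frac{1}{k}\sum_{j=1}^{k}P_n\bigl((j\,k)\bigr),
\end{equation*}
with the convention that $(k\,k)$ is the identity. This is the standard decomposition $S_k=\bigsqcup_{j}(j\,k)S_{k-1}$, and it gives $\Pi_k=A_k\Pi_{k-1}$ by induction. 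Each $A_k$ is an average of $k$ unitaries with uniform weights. It is therefore block-encoded with normalization one by three steps: prepare the uniform state $\frac{1}{\sqrt k}\sum_{j}|j\rangle$ on a $\lceil\log_2 k\rceil$-qubit register, apply the controlled SELECT $\sum_j |j\rangle\langle j|\otimes \mathrm{SWAP}_{j,k}$, and unprepare.

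Multiplying block encodings costs fresh ancilla registers but keeps normalization one. Using a separate register for each $k$ gives $\sum_k\lceil\log k\rceil=O(n\log n)$ auxiliary qubits, which matches the claimed $m$.

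The next step is to count T gates. A naive controlled-SWAP of two qudits costs $O(\log d)$ Toffolis, which would make each SELECT cost $O(k\log d)$ and the total $O(n^2\log d)$. To reach $\widetilde O(n)$ I will instead implement SELECT$_k$ as follows:
\begin{enumerate}
\item Unary-iterate over $j$ with the Babbush et al.\ construction, so that the control logic costs $O(k)$ T gates.
\item Move qudit $j$ into position $k$ by a controlled cyclic shift of registers. The idea is to decompose $\sum_j|j\rangle\langle j|\otimes \mathrm{SWAP}_{j,k}$ as a "select-swap" network: binary-controlled swaps that move block $j$ to position $k$ in $O(\log k)$ layers, where layer $\ell$ conditionally swaps registers at distance $2^\ell$ according to bit $\ell$ of $j$.
\end{enumerate}

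I suspect this still costs $O(k\log d)$ controlled-SWAP gates per $A_k$. The fix I would try first is to implement the swap of $\lceil\log_2 d\rceil$-qubit registers controlled by a single qubit, where each qubit-level controlled swap costs a constant number of T gates. That is $O(k\log d)$ T gates per stage and $O(n^2\log d)$ overall, too many. So the key idea must be different.

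The alternative I would pursue is to not act on all copies per stage. Instead, select a single index $j$ and swap only registers $j$ and $k$, using an indirect-addressing (QROM-style) swap whose Toffoli count is $O(k+\log d)$ when measurement-based uncomputation and ancilla-assisted fan-out are allowed. With this cost, $\sum_k O(k)$ is still quadratic.

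The remaining route is a sorting-network / random-shuffle construction. Prepare a uniformly random permutation register via the Fisher--Yates "choice" registers $j_2,\dots,j_n$ in superposition, which is exactly the product above. Then apply the permutation with a sorting network of $O(n\log n)$ comparators acting on keys of $O(\log n)$ bits. Controlled swaps of payload qudits cost $O(\log d)$ T each, giving $O(n\log n\log d)$ T gates, i.e.\ $\widetilde O(n)$ with polylog factors in $n,d$. The procedure is:
\begin{enumerate}
\item Convert the Fisher--Yates indices into a permutation encoding (Lehmer code to one-line notation) with $\widetilde O(n)$ reversible arithmetic.
\item Sort the payload qudits by the keys with an AKS or bitonic network, giving $O(n\log^2 n)$ comparators.
\item Uncompute the keys.
\end{enumerate}

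The main obstacle is this step: converting the uniform superposition over $S_n$ into keys, applying the permutation, and cleanly uncomputing everything, all in $\widetilde O(n)$ T gates. Garbage must not remain entangled with the payload, otherwise the postselected block would not equal $\Pi_{n,d}$.

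Finally, approximate the non-uniform state-preparation amplitudes $1/\sqrt k$ when $k$ is not a power of two. Use exact amplitude amplification or Clifford+T synthesis of rotations to precision $\epsilon/n$ each, at $O(\log(n/\epsilon))$ T gates per rotation. Summing the per-stage errors then gives operator-norm error at most $\epsilon$ and T-count $O(n\,\mathrm{polylog}(n,d,1/\epsilon))$.
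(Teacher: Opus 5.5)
\textbf{There is a genuine gap: the key-conversion step is asserted, not constructed.} Your factorization $\Pi_{n,d}=A_n\cdots A_2$ is correct, and so is the product of normalization-one block encodings. You also correctly find that this route costs $\Theta(n^2\log d)$. The routing half of your final sorting route is sound. Recording each comparison in a fresh bit, using that bit to control the payload SWAP, and then running the inverse comparators on keys and records only is exactly the paper's clean partition. With an AKS network it costs $O(n\log n(\log n+\log d))$ T gates.

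So the ``garbage'' issue you flag is not the real obstacle. If the keys are a reversible function of the index registers, running that computation backwards after the inverse-comparator pass returns all workspace to zero. The real gap is the claim that the Fisher--Yates indices $j_2,\dots,j_n$ (or a Lehmer code) can be converted into one-line keys with $\widetilde O(n)$ reversible arithmetic. The direct conversion simulates the shuffle, which is $n-1$ data-dependent swaps $a[j_k]\leftrightarrow a[k]$ on an array of $\lceil\log n\rceil$-bit entries. With $j_k$ in superposition, each indexed swap acts on all $k$ entries. Standard implementations (unary iteration, swap networks) then cost $\Theta(k\log n)$ Toffolis, so the total is $\Theta(n^2\log n)$. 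This is the quadratic barrier you already found for SELECT$_k$, moved from the payload to the key register. Lehmer-code decoding likewise needs order-statistic queries on a changing set, and there is no evident oblivious $\widetilde O(n)$ circuit for it. Since you supply no such decoder, the $\widetilde O(n)$ T-count does not follow.

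\textbf{The paper's missing idea.} The paper avoids conversion entirely. It chooses an encoding of $S_n$ whose uniform superposition is a product state and whose registers are already binary sort keys. At each internal node $v$ of a balanced binary tree, a mask of weight $w_v=\lfloor k_v/2\rfloor$ on the $k_v$ current copies selects those sent to the left child. Mask collections are in bijection with $S_n$ because $\prod_v\binom{k_v}{w_v}=n!$ (Lemma~\ref{lem:subset-bijection}), and their uniform superposition is $\bigotimes_v|D_{k_v}^{w_v}\rangle$. The node's sorting network sorts the mask bits themselves; the records control the data C-SWAPs, and inverse comparators restore the mask. Routing then costs $O(n\log^2 n\log d)$ T gates. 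The only approximation is Dicke-state preparation, at $O(n\log n\log^3(n/\epsilon))$ T gates.

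\textbf{An alternative fix.} If you want to keep your sort-by-keys design, take the keys to be $n$ uniformly random $b$-bit strings prepared by Hadamards, in the spirit of Ref.~\cite{Berry2018Sorting}. The induced sorting permutation is then exactly uniform except on key collisions, which contribute block error at most $n(n-1)/2^b$. The cost is $O(n\log(n/\epsilon))$ auxiliary qubits rather than $O(n\log n)$.
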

\end{samepage}

\begin{table*}[t]
\caption{Comparison of T-count upper bounds.  Previously reported circuit and
SWAP-test complexities are expressed as T-counts under standard
Clifford+T compilation; poly-logarithmic overheads of $n$, $d$ and $\epsilon$ are absorbed
by $\widetilde O(\cdot)$. The rows implement different circuit primitives, as
indicated in the second column.}
\label{tab:t-count-comparison}
\begin{ruledtabular}
\begin{tabular}{lccc}
\textbf{Method} & \textbf{Implementation} & \textbf{T-count} & \textbf{Regime} \\
\hline
Corrected Krovi~\cite{Krovi2019,Burchardt2025}
& QST-based & $\widetilde O(n^4)$
& $d>n\geq 2$, $0<\epsilon<1$ \\
BCH / compressed BCH~\cite{Bacon2006,Burchardt2025}
& QST-based & $\widetilde O(\min\{n^5,nd^4\})$
& $n,d\geq2$, $0<\epsilon<1$ \\
Random SWAP~\cite{Brahmachari2025,Brahmachari2026}
& SWAP tests & $\widetilde{O}(\min\{n^4,nd^3\})$
& $n,d\geq2$, $0<\epsilon<1$ \\
\textbf{This work}
& Dicke balanced tree mask & $\widetilde O(n)$
& $n,d\geq2$, $0<\epsilon<1/2$ \\
\end{tabular}
\end{ruledtabular}
\end{table*}

As summarized in Table~\ref{tab:t-count-comparison}, compared with the best reported high-dimensional QST T-count bound $\widetilde{O}(n^4)$, our direct projector construction has nearly linear T-count and only logarithmic dependence on the local dimension $d$. The key technical ingredients are a compact control architecture for Linear Combination of Unitary (LCU) and a sorting network construction based on Dicke states, encoded to generate the required permutations. We also prove a robust lower bound for the same operator-synthesis task, using stabilizer nullity~\cite{Beverland2020Nullity,JiangWang2023Nullity}.

\begin{restatable}{theorem}{lowerbound}\label{thm:symmetric-projection-lower}
For $n\geq2$, $d\geq2$, and $0\leq\epsilon<\tfrac12\sqrt{1-1/n}$, let $U$ be any block encoding of $\Pi_{n,d}$ with operator-norm error at most $\epsilon$, and let $V$ be any Clifford+T implementation of $U$ using clean stabilizer work qubits. Then $T(V)\geq2n-2$. The bound allows every finite positive normalization and any number of auxiliary qubits.
\end{restatable}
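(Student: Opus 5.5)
Our plan is to lower-bound the T-count through the stabilizer nullity $\nu$ of Beverland et al.\ and Jiang--Wang. For a unitary $V$ on $N$ qubits, $\nu(V)$ is $N$ minus $\log_2$ of the number of Pauli operators $P$ for which $VPV^\dagger$ is again, up to phase, a Pauli operator. Nullity is invariant under Cliffords, subadditive under composition, and satisfies $\nu(T)=1$; hence $T(V)\ge \nu(V)$. Adding clean stabilizer ancillas and projecting onto them should not increase the relevant nullity. We therefore aim to show that any $V$ as in Theorem~\ref{thm:symmetric-projection-lower} has nullity at least $2n-2$, or at least contains a subcircuit that does.

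\emph{Reduction to a qubit subspace.} First we embed the problem in qubits. Each qudit is encoded in $\lceil\log_2 d\rceil$ qubits, and we restrict to the two-dimensional subspace spanned by the computational basis states $|0\rangle,|1\rangle$ of each qudit. On this subspace, $\Pi_{n,d}$ acts as $\Pi_{n,2}$, the projector onto the symmetric subspace of $n$ qubits, which has dimension $n+1$. Fixing the remaining qubits of every qudit to $|0\rangle$ is a clean stabilizer input together with a postselection. A block encoding of $\Pi_{n,d}$ with error $\epsilon$ therefore gives a block encoding of $\Pi_{n,2}$ with error at most $\epsilon$ and the same normalization $\alpha$. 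This step introduces no T gates.

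\emph{Nullity argument.} We follow the Jiang--Wang idea of studying how Pauli operators propagate through a block encoding. Consider the Pauli group $\mathcal P_n$ on the $n$ system qubits, which has $4^n$ elements up to phase. For $A=\Pi_{n,2}$, we count the Paulis $P$ such that $APA$ is proportional to a single Pauli $Q$ with $QA=AQ$. The candidate set consists of $X^{\otimes n}$, $Y^{\otimes n}$, $Z^{\otimes n}$, $I$ and their products, so it has size about $4$. A low-nullity Clifford+T circuit sends a subgroup of size $4^{N-\nu}$ of the full $N$-qubit Pauli group to Paulis. Compressing onto the clean ancillas and the block, a stabilizer-preserving structure of size roughly $4^{n-\nu/2}$ should survive. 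For an exact encoding this forces $4^{n-\nu/2}\le 4$, that is, $\nu\ge 2n-2$.

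\emph{Robustness (main obstacle).} The hardest step is making this count survive an error $\epsilon$ and an arbitrary normalization $\alpha$. We would show the following. If $\widetilde A=\alpha^{-1}(\text{block})$ satisfies $\|\widetilde A-\Pi_{n,2}\|\le\epsilon$ and $\widetilde A$ commutes with many Paulis in the stabilizer-propagation sense above, then averaging over that Pauli subgroup $G$ (a twirl) gives a new operator $\mathcal T_G(\widetilde A)$. This operator is also $\epsilon$-close to $\mathcal T_G(\Pi_{n,2})$, because twirling is a contraction in operator norm. Here $\mathcal T_G(\Pi_{n,2})$ is a combination of at most $4^{n-|G|}$ Pauli projectors. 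We then compare spectra. The operator $\Pi_{n,2}$ has eigenvalue $1$ with multiplicity $n+1$ and eigenvalue $0$ elsewhere. A twirled operator of large $G$, on the other hand, is block-constant on cosets. Its distance from any rank-$(n+1)$ projector is at least about $\tfrac12\sqrt{1-1/n}$. We expect this value to come from the overlap of the symmetric Dicke states with the Pauli-stabilized subspaces: the worst case is $|\langle D\rangle|^2\sim 1/n$, which gives the threshold $\tfrac12\sqrt{1-1/n}$. We anticipate checking this Dicke-state overlap bound explicitly to be the central calculation, whereas the nullity bookkeeping and the qudit-to-qubit reduction are routine.
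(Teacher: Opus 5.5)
Your qudit-to-qubit reduction matches the paper's, and the target inequality $4^{n-\nu/2}\le4$ has the right arithmetic. However, the two steps that carry the proof are asserted rather than supplied.

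\textbf{First gap: from T-count to symmetries of the block.} You say that ``a structure of size roughly $4^{n-\nu/2}$ should survive'' the passage from $V$ to the accepted block. This is not bookkeeping. You must keep only Paulis compatible with $|0^m\rangle$ on both input and output, quotient by those acting only on the signal, and handle the work register. The resulting count must not depend on $m$, since the theorem allows arbitrarily many ancillas. (Your definition is also off: unitary nullity is $2N-\log_2 s_u$, not $N-\log_2 s_u$.)

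The paper handles this as follows. It applies $V$ to $|0^m\rangle\otimes|\Phi_n\rangle_{AA'}\otimes|s\rangle$ and postselects the signal on $0^m$. This leaves $\operatorname{vec}(K_U)/\|K_U\|_{\mathrm F}$ tensored with $|s\rangle$. A separate lemma shows that one T gate raises $\nu_s$ by at most one, and that computational-basis postselection never raises it. The key point there is that if the restriction kernel is nontrivial, then $Z_j$ is already a stabilizer, so no index is lost. Hence this $2n$-qubit state has at least $2^{2n-t}$ Pauli stabilizers, and the normalization $\alpha$ drops out automatically.

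\textbf{Second gap: the form of the symmetry.} A stabilizer $\gamma\,Q\otimes R$ of $\operatorname{vec}(M)$ means exactly $QMR^{\mathsf T}=\gamma^{-1}M$. Here $Q,R$ are independent Pauli strings and the phase is arbitrary. Your notion (``$APA\propto Q$ with $QA=AQ$'') and your twirl over a group of conjugations only see the diagonal case $R^{\mathsf T}\propto Q^\dagger$ with trivial phase. The spectral/coset estimate you would need is also left as an expectation.

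What is required is a per-pair gap: unless $Q=R=\sigma^{\otimes n}$, one has $\|Q\Pi_{n,2}R^{\mathsf T}-\omega\Pi_{n,2}\|\ge\sqrt{1-1/n}$ for every unit $\omega$. The paper proves this in the following steps.
\begin{itemize}
\item For a noncollective $R^{\mathsf T}$, choose $h\in\{X,Y,Z\}$ that commutes with some tensor factors of $R^{\mathsf T}$ and anticommutes with others.
\item Test on $|0_h\rangle^{\otimes n}$. Then $\Pi_{n,2}R^{\mathsf T}|0_h\rangle^{\otimes n}$ is a weight-$k$ Dicke state ($1\le k\le n-1$) scaled by $1/\sqrt{\binom nk}$, so the cross term is at most $1/\binom nk\le 1/n$.
\item Distinct collective pairs are handled by the $\pm1$ eigenvalues of $p^{\otimes n}$ on the symmetric subspace.
\end{itemize}

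Because $QMR^{\mathsf T}=\omega M$ holds exactly for the implemented block, the triangle inequality gives $\|Q\Pi_{n,2}R^{\mathsf T}-\omega\Pi_{n,2}\|\le2\epsilon<\sqrt{1-1/n}$. This is the source of the factor $\tfrac12$ in the threshold, not the gap itself. So only the four matched collective strings can stabilize the state. It follows that $t\ge2n-2$, without any continuity argument for nullity.
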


Taking together Theorem \ref{thm:symmetric-projection-synthesis} and \ref{thm:symmetric-projection-lower}, our $\widetilde O(n)$ construction and the $2n-2$ lower bound establish a nearly optimal T-count for implementing the symmetric projector protocols, which has been demonstrated to attain the optimal purification performance under depolarizing noise. The comparison between previous research and our results is shown in Table \ref{tab:t-count-comparison}.

The rest of the paper is organized as below. Section \ref{sec:preliminaries}
is about the preliminaries. In section \ref{sec3}, we introduce our synthesis framework in detail, then provide a T-count analysis for approximating purification protocols. In section \ref{sec4} we prove the robust linear lower bound for block-encoding the symmetric projector. We make a summary of this paper in section \ref{sec5}, including achievements and drawbacks, and then raise some open problems for future work.

\section{Preliminaries}
\label{sec:preliminaries}

This section proceeds in three steps. Section~\ref{subsec:probabilistic-purification} defines probabilistic $n$-to-1 purification and its probability--fidelity trade-off. Section~\ref{subsec:depolarizing-noise} specifies the depolarizing-noise model, and Sec.~\ref{subsec:symmetric-projection-protocol} introduces the symmetric projector protocol and converts its implementation into the problem of T gate synthesis of $\Pi_{n,d}$.

\subsection{Probabilistic $n$-to-1 quantum state purification}
\label{subsec:probabilistic-purification}

Let $\mathcal{H}_A \cong \mathbb{C}^{d}$ be the Hilbert space of a
$d$-dimensional quantum system.  For any Hilbert space $\mathcal{H}$, we write
$\mathcal{L}(\mathcal{H})$ for the vector space of linear operators acting on
$\mathcal{H}$. We denote an unknown target pure state $\psi$ by its density matrix
$|\psi\rangle\!\langle\psi| \in \mathcal{L}(\mathcal H_A)$ and the corresponding noisy state by
$\rho_{\psi}:=\mathcal{N}(\psi)$, where $\mathcal{N}$ is a fixed noise
channel.  An $n$-to-1 probabilistic purification protocol takes $n$
identically prepared noisy copies $\rho_{\psi}^{\otimes n}$ as input and is
described, on its successful branch, by a completely positive and
trace-nonincreasing map
\begin{equation*}
    \mathcal{E}_{A_1\cdots A_n\to B}:
    \mathcal{L}\!\left(\mathcal{H}_A^{\otimes n}\right)
    \longrightarrow \mathcal{L}(\mathcal{H}_B),
    \qquad \mathcal{H}_B\cong\mathcal{H}_A.
\end{equation*}
Such a map is abbreviated as a CPTN map \cite{Choi1975,Jamiolkowski1972}.  Complete positivity ensures that the
map remains positive when acting jointly with an arbitrary reference system,
whereas the trace-nonincreasing condition accounts for the probability of the
selected measurement outcome.
For a given target state $\psi$, the unnormalized and normalized outputs are
\begin{equation*}
    \widetilde{\sigma}_{\psi}
    :=\mathcal{E}\!\left(\rho_{\psi}^{\otimes n}\right),
    \qquad
    \sigma_{\psi}
    :=\frac{\widetilde{\sigma}_{\psi}}{p_{\psi}},
\end{equation*}
where $p_{\psi}:=\operatorname{Tr}\!\left(\widetilde{\sigma}_{\psi}\right)$ is the probability that the protocol succeeds.  Conditioned
on success, its purification fidelity is
\begin{equation*}
    f_{\psi}
    :=\langle\psi|\sigma_{\psi}|\psi\rangle
    =\frac{\operatorname{Tr}\!\left[
        \psi\,\mathcal{E}(\rho_{\psi}^{\otimes n})
    \right]}{p_{\psi}}.
\end{equation*}
Thus, unlike deterministic purification, a probabilistic protocol may improve
the conditional output fidelity by rejecting unsuccessful outcomes.

For universal purification, the target state is unknown and the performance is
averaged over the Haar measure $\mathrm{d}\psi$ \cite{Harrow2013,Mele2024}.  The average success probability and
the corresponding conditional average fidelity of $\mathcal{E}$ are
\begin{align*}
    p(\mathcal{E})
    &:=\int \operatorname{Tr}\!\left[
        \mathcal{E}(\rho_{\psi}^{\otimes n})
    \right]\mathrm{d}\psi,
    \\
    f(\mathcal{E})
    &:=\frac{1}{p(\mathcal{E})}
      \int \operatorname{Tr}\!\left[
        \psi\,\mathcal{E}(\rho_{\psi}^{\otimes n})
      \right]\mathrm{d}\psi.
\end{align*}
The two complementary optimization problems are therefore
\begin{align*}
    F_{\mathcal{N}}(n,p)
    &:=\max_{\mathcal{E}\in\mathrm{CPTN}}
      \left\{f(\mathcal{E}):p(\mathcal{E})=p\right\},
    \\
    P_{\mathcal{N}}(n,f)
    &:=\max_{\mathcal{E}\in\mathrm{CPTN}}
      \left\{p(\mathcal{E}):f(\mathcal{E})=f\right\}.
\end{align*}
The first quantity is the largest fidelity attainable with a prescribed
success probability, whereas the second is the largest success probability
attainable at a prescribed fidelity.  Together, they characterize the
probability--fidelity trade-off of universal probabilistic purification.

\subsection{Depolarizing noise}
\label{subsec:depolarizing-noise}

We now specialize the preceding purification task to inputs affected by the $d$-dimensional depolarizing
channel \cite{Preskill2018,Urbanek2021}
\begin{equation*}
    \mathcal{D}_{\delta}(X)
    :=(1-\delta)X
      +\delta\,\operatorname{Tr}(X)\frac{I_d}{d},
    \qquad 0\leq\delta\leq 1.
\end{equation*}
For a pure target state $\psi$, the resulting noisy state is
\begin{equation*}
    \rho_{\psi}
    =\mathcal{D}_{\delta}(\psi)
    =\lambda_0\psi
      +\lambda_{\perp}(I_d-\psi),
\end{equation*}
where
\begin{equation*}
    \lambda_0
    =1-\frac{d-1}{d}\delta,
    \qquad
    \lambda_{\perp}
    =\frac{\delta}{d}.
\end{equation*}
Hence the unpurified input fidelity is
$\langle\psi|\rho_{\psi}|\psi\rangle=\lambda_0$.  In a basis whose first
vector is $|\psi\rangle$, the spectrum of $\rho_{\psi}$ is represented by
\begin{equation*}
    \Lambda
    :=\operatorname{diag}\!\left(
        \lambda_0,
        \underbrace{\lambda_{\perp},\ldots,\lambda_{\perp}}_{d-1}
      \right).
\end{equation*}
The channel is unitarily covariant, namely
\begin{equation*}
    \mathcal{D}_{\delta}(UXU^{\dagger})
    =U\mathcal{D}_{\delta}(X)U^{\dagger}
\end{equation*}
for every unitary $U$.  Consequently, any covariant purification protocol has
the same success probability and conditional fidelity for every target
$|\psi\rangle$; these quantities can therefore be evaluated using the fixed
spectrum $\Lambda$.

\subsection{Symmetric projector operator and block encoding}
\label{subsec:symmetric-projection-protocol}

We now specify the purification protocol whose circuit cost we analyze. Let $S_n$ be the symmetric group on $n$ elements and let $P_n(\pi)$ denote the
unitary that permutes the $n$ qudit copies of dimension $d$ according to $\pi\in S_n$.  The
projector onto the completely symmetric subspace of
$\mathcal{H}_A^{\otimes n}$ is
\begin{equation*}
    \Pi_{n,d}
    :=\frac{1}{n!}\sum_{\pi\in S_n}P_n(\pi).
\end{equation*}
For depolarized i.i.d. inputs, the purification protocol studied in Ref.~\cite{Yao2025} is the CPTN map
\begin{equation*}
    \mathcal{E}^{\star}_{A_1\cdots A_n\to A_1}(X)
    :=\operatorname{Tr}_{A_2\cdots A_n}\!\left[
        \Pi_{n,d} X\Pi_{n,d}
      \right].
\end{equation*}
Operationally, the protocol projects the $n$ noisy copies onto their completely
symmetric component, accepts only the successful projection outcome, and then
discards $A_2,\ldots,A_n$.  On input $\rho_{\psi}^{\otimes n}$, its success
probability and normalized output are
\begin{align*}
    p_n
    &:=\operatorname{Tr}\!\left[
        \Pi_{n,d}\rho_{\psi}^{\otimes n}
      \right],
    \\
    \sigma_{n,\psi}
    &:=\frac{1}{p_n}
      \operatorname{Tr}_{A_2\cdots A_n}\!\left[
        \Pi_{n,d}\rho_{\psi}^{\otimes n}\Pi_{n,d}
      \right],
\end{align*}
and the conditional fidelity is
\begin{equation*}
    f_n:=\langle\psi|\sigma_{n,\psi}|\psi\rangle.
\end{equation*}
The purification mechanism has a simple symmetry-based interpretation.  The
ideal component $\psi^{\otimes n}$ is invariant under every permutation of the
input copies and therefore lies entirely in the completely symmetric
subspace.  Noise, by contrast, produces components in which different copies
occupy different eigenstates; although the full state
$\rho_{\psi}^{\otimes n}$ is permutation invariant as a density operator,
these components generally have support in both symmetric and nonsymmetric
representation sectors.  Postselection with $\Pi_{n,d}$ removes all nonsymmetric
sectors without attenuating the ideal vector $|\psi\rangle^{\otimes n}$.
Consequently, conditioned on a successful projection, the relative weight of
the ideal component increases, which is the source of the fidelity
improvement.

Since $\rho_{\psi}^{\otimes n}$ is permutation invariant, it commutes with
$\Pi_{n,d}$.  Consequently, the second projector in the definition of
$\sigma_{n,\psi}$ may equivalently be omitted when computing the success
probability or expectation values.

For the depolarizing spectrum $\Lambda$ defined above, both performance
measures admit the recursive expressions
\begin{align*}
    p_0&:=1,
    &
    p_n
    &=\frac{1}{n}\sum_{j=1}^{n}
      p_{n-j}\operatorname{Tr}\!\left(\Lambda^j\right),
    \\
    &&
    f_n
    &=\frac{1}{np_n}\sum_{j=1}^{n}
      p_{n-j}\lambda_0^j.
\end{align*}
In particular, for two input copies,
\begin{equation*}
    p_2=\frac{1+\operatorname{Tr}(\Lambda^2)}{2},
    \qquad
    f_2=\frac{\lambda_0+\lambda_0^2}{2p_2}.
\end{equation*}
Yao et al. \cite{Yao2025} proved that $(p_2,f_2)$ is a jointly optimal ``golden
point'' for arbitrary local dimension $d$:
\begin{equation*}
    F_{\mathcal{D}_{\delta}}(2,p_2)=f_2,
    \qquad
    P_{\mathcal{D}_{\delta}}(2,f_2)=p_2.
\end{equation*}
Moreover,
$F_{\mathcal{D}_{\delta}}(2,p)=f_2$ for every $0<p\leq p_2$. It means
the success probability cannot be traded for a fidelity beyond
$f_2$. This establishes the optimal
probability--fidelity trade-off at the golden point. For general $n$, the same
symmetric projector protocol yields the pair $(p_n,f_n)$ specified by the
recursions above. Its optimality was numerically verified in
Ref.~\cite{Yao2025} for $n\leq 8$ and $d\leq 10$ and conjectured for arbitrary
$n$ and $d$. The analytic proof in their work applies to $n=2$.

It remains to describe how the nonunitary projector $\Pi_{n,d}$ is realized by a
quantum circuit. Given an operator $M$ on the input copies, a unitary $U$
on $m$ ancilla qubits and the system is an exact block encoding \cite{Gilyen2019QSVT,Camps2024,LowChuang2019} of $M$ if
\begin{equation*}
    (\langle 0^m|\otimes I)U(|0^m\rangle\otimes I)=M.
\end{equation*}
Equivalently, in the decomposition induced by the ancilla state $|0^m\rangle$,
the encoded operator appears in the upper-left block,
\begin{equation*}
    U=
    \begin{pmatrix}
        M & *\\
        * & *
    \end{pmatrix}.
\end{equation*}
Because the expression for $\Pi_{n,d}$ above is a uniform linear combination of
permutation unitaries, it admits a standard linear-combination-of-unitaries
(LCU) block encoding \cite{ChildsWiebe2012,LowChuang2019,Gilyen2019QSVT}.  Let $m=\lceil\log_2(n!)\rceil$, let the valid computational
basis states of the ancilla be labeled by permutations, and define
\begin{align*}
    P|0^m\rangle
    &:=\frac{1}{\sqrt{n!}}
      \sum_{\pi\in S_n}|\pi\rangle,
    \\
    S
    &:=\sum_{\pi\in S_n}
      |\pi\rangle\!\langle\pi|\otimes P_n(\pi),
\end{align*}
where the selection unitary $S$ is extended arbitrarily but unitarily on unused
ancilla basis states.  The unitary
\begin{equation*}
    U_{n,d}
    :=(P^{\dagger}\otimes I)\,
      S\,
      (P\otimes I)
\end{equation*}
satisfies
\begin{equation*}
    (\langle0^m|\otimes I)U_{n,d}
    (|0^m\rangle\otimes I)
    =\frac{1}{n!}\sum_{\pi\in S_n}P_n(\pi)
    =\Pi_{n,d}.
\end{equation*}
Thus $U_{n,d}$ has the form
\begin{equation*}
    U_{n,d}
    =\begin{pmatrix}
        \Pi_{n,d} & *\\
        * & *
      \end{pmatrix}.
\end{equation*}
Preparing the ancilla in $|0^m\rangle$, applying $U_{n,d}$, measuring the
ancilla, and postselecting the all-zero outcome applies the Kraus operator
$\Pi_{n,d}$ to the input.  The probability of this outcome is exactly $p_n$; after
tracing out $A_2,\ldots,A_n$, the remaining register is the purified state
$\sigma_{n,\psi}$.  Hence implementing the purification protocol does not
require a complete decomposition into all Schur--Weyl sectors: it suffices to
distinguish the completely symmetric sector from its orthogonal complement.
Thus, the purification performance of this fixed protocol is the pair
$(p_n,f_n)$, while the circuit-synthesis target is a unitary whose upper-left
block approximates its successful Kraus operator $\Pi_{n,d}$ in operator norm.

\section{Synthesis framework and resource estimation}\label{sec3}

Section~\ref{subsec:symmetric-projection-protocol} reduces the implementation
of the purification protocol to a block encoding of
$\Pi_{n,d}=n!^{-1}\sum_{\pi\in S_n}P_n(\pi)$. Standard LCU therefore requires a
uniform superposition of permutation instructions and a selection unitary that coherently applies $P_n(\pi)$. A direct conditional controlled design (e.g. the implementation in \cite{Yao2025}) for each permutation does not by itself provide an efficient uniform selection circuit. We address this issue by encoding every permutation by a uniformly-controlled circuit structure.

We construct a block encoding of $\Pi_{n,d}$ by preparing a uniform
superposition of permutation instructions, applying the instructed
permutation to the input copies, and reversing the preparation.  The
instructions consist of binary masks arranged on a balanced binary tree.
We first explain this encoding, then give its preparation and routing
circuits, and finally derive the LCU block and its approximation error.

Throughout this section, $D$ denotes the Dicke-mask register, $C$ the
control register holding the sorting history, and $A$ the system holding
the input copies.  We write $P$ for preparation and retain $P_n(\pi)$ for
the permutation of the copies; the subscript and argument distinguish
these two operators.  The total number of auxiliary qubits is denoted by
$m$, and the number of switches in a size-$k$ sorting network by $f(k)$.
We first consider qubit copies in the Clifford+T model with unrestricted
connectivity.  All logarithms are base two.  Sorting and routing are
exact; approximation enters only through Dicke-state preparation.

\subsection{Encoding permutations by recursive subsets}
\label{subsec:recursive-subsets}

We first explain how a collection of masks determines one complete
permutation. A \emph{mask} for a block of $k$ copies is a binary string
$x=x_1\cdots x_k$ specifying which copies will enter a left subblock.
A bit $x_i=1$ selects the copy currently at position $i$, whereas $x_i=0$
assigns it to the right subblock.  We always select
$w=\lfloor k/2\rfloor$ copies, so valid masks have Hamming weight
$|x|:=\sum_i x_i=w$.  For example, the mask $1010$ applied to the ordered
copies $(a,b,c,d)$ selects $a,c$ for the left subblock and $b,d$ for the
right subblock.  The symbols here identify positions for the explanation;
no classical description or identifying label of a quantum state is
stored by the circuit.

Place this selection rule on a balanced binary tree with $n$ leaves.
The root contains all $n$ copies.  An internal node $v$ containing $k_v$
copies has a left child of size $w_v=\lfloor k_v/2\rfloor$ and a right
child of size $k_v-w_v$; a leaf specifies one final output position.
Associate a separate mask $x_v$ with every internal node.  As illustrated
in Fig.~\ref{fig:mask-tree}, the root mask first determines the two groups,
and the child masks then specify how to split each resulting group.
Thus a child mask refers to positions \emph{after} its parent's
partition, rather than to the original input positions.

\begin{figure}[tbp]
\centering
\begin{tikzpicture}[x=1cm,y=1cm,
 every node/.style={font=\footnotesize},
 treeblock/.style={draw=blue!55!black,fill=blue!3,rounded corners=2pt,
                  align=center,inner sep=5pt},
 treeleaf/.style={draw=black!50,fill=black!3,minimum width=0.6cm,
                 minimum height=0.45cm},
 edge/.style={->,>=stealth,draw=black!65}]
\node[treeblock,text width=4.2cm] (root) at (4,4.45)
 {$k=4$, $w=2$: input $(a,b,c,d)$\\
  $D_4$ prepared in $|D_4^2\rangle$\\
  branch $x_4=1010$: $(a,c\mid b,d)$};
\node[treeblock,text width=2.8cm] (left) at (2,2.05)
 {input $(a,c)$\\
  $D_{2,L}$ in $|D_2^1\rangle$\\
  $x_{2,L}=01$: $(c\mid a)$};
\node[treeblock,text width=2.8cm] (right) at (6,2.05)
 {input $(b,d)$\\
  $D_{2,R}$ in $|D_2^1\rangle$\\
  $x_{2,R}=01$: $(d\mid b)$};
\draw[edge] (root.south)--node[pos=.55,above left,inner sep=1pt]
 {selected: $a,c$} (left.north);
\draw[edge] (root.south)--node[pos=.55,above right,inner sep=1pt]
 {remaining: $b,d$} (right.north);
\foreach \xx/\symbol/\name in {1/c/lone,3/a/ltwo,5/d/rone,7/b/rtwo}{
 \node[treeleaf] (\name) at (\xx,0.25) {$\symbol$};
}
\draw[edge] (left.south)--node[above left,inner sep=1pt] {$1$} (lone.north);
\draw[edge] (left.south)--node[above right,inner sep=1pt] {$0$} (ltwo.north);
\draw[edge] (right.south)--node[above left,inner sep=1pt] {$1$} (rone.north);
\draw[edge] (right.south)--node[above right,inner sep=1pt] {$0$} (rtwo.north);
\node[align=center] at (4,-0.55)
 {$z=x_4\Vert x_{2,L}\Vert x_{2,R}=1010\Vert01\Vert01$};
\end{tikzpicture}
\caption{Balanced binary tree for four copies.  Each internal node has
a Dicke mask that selects the copies sent to its left child.  The figure
follows one computational branch, $z=1010\Vert01\Vert01$, from input
$(a,b,c,d)$ to output $(c,a,d,b)$.  Child masks refer to the positions
after the parent partition.  All $6\cdot2\cdot2=24$ valid collections
are present with equal amplitude.  The root's intermediate ordering
uses the five-switch network displayed in Fig.~\ref{fig:clean-partition}.}
\label{fig:mask-tree}
\end{figure}

To make the intermediate ordering unambiguous, fix for each block size
$k$ a deterministic procedure that sorts its mask bits in descending
order by pairwise comparisons and swaps.  Section~\ref{subsec:controlled-routing}
defines a sorting network for this purpose and gives its reversible
implementation.  Sorting a mask $x$ produces $1^w0^{k-w}$ and induces a
definite permutation $\tau_x$ of the positions.  Performing the same swaps
on the copies moves precisely the selected subset to the first $w$
positions.  Decode the children on these resulting ordered subblocks and
continue to the leaves.  Equal mask bits need not retain their relative
order: after the subset has been fixed, the children generate every
ordering within that subset.

A \emph{mask collection} contains one mask for every internal node.
Fix the node order $v_1,\ldots,v_{n-1}$ by traversing the tree from the root,
level by level and from left to right.  We represent a collection by the
single binary string $z=x_{v_1}\Vert\cdots\Vert x_{v_{n-1}}$, where
$\Vert$ denotes concatenation.  The mask lengths are fixed by the tree,
so this string has a unique decomposition into its constituent masks.
Writing $N:=\sum_v k_v$, the set of valid encoded collections is
\begin{equation*}
 \mathcal Z_n:=\{z\in\{0,1\}^N:|x_v|=w_v\ \text{for every }v\}.
\end{equation*}
Let $\pi_z$ denote the complete permutation obtained by decoding $z$ from
the root to the leaves.  In Fig.~\ref{fig:mask-tree},
$z=1010\Vert01\Vert01$ first gives $(a,c\mid b,d)$ and then
$(c,a\mid d,b)$.  This demonstrates how local subset choices determine
one global ordering.

\begin{lemma}\label{lem:subset-bijection}
The map $z\mapsto\pi_z$ is a bijection from $\mathcal Z_n$ to $S_n$.
\end{lemma}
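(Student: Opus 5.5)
We prove the lemma by strong induction on $n$, combined with a cardinality count. The two ingredients are that $|\mathcal Z_n|=n!$ and that $z\mapsto\pi_z$ is injective (equivalently, surjective).

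\emph{Counting.} The masks at distinct internal nodes are chosen independently. A node with $k_v$ copies admits $\binom{k_v}{w_v}$ valid masks, so
\begin{equation*}
|\mathcal Z_n|=\prod_{v}\binom{k_v}{w_v}.
\end{equation*}
Write $Z(k)$ for this product over the subtree rooted at a node of size $k$. Its children have sizes $w=\lfloor k/2\rfloor$ and $k-w$, which gives the recursion $Z(k)=\binom{k}{w}Z(w)Z(k-w)$ with $Z(1)=1$. By induction, $Z(k)=\binom{k}{w}\,w!\,(k-w)!=k!$. Hence $|\mathcal Z_n|=n!$, so it suffices to prove surjectivity.

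\emph{Surjectivity.} We induct on the size $k$ of a block, with the inductive claim: for every target arrangement $\sigma$ of the $k$ copies currently in the block, there is a collection of masks for the subtree that decodes to $\sigma$. Given a target $\sigma$, let $L$ be the set of copies that $\sigma$ places in the first $w$ output slots. Choose the root mask $x$ with $x_i=1$ exactly when the copy at position $i$ lies in $L$; this mask has weight $w$. Applying the sorting permutation $\tau_x$ moves exactly the copies of $L$ into the first $w$ positions, in some deterministic order $\alpha$, and the remaining copies into the last $k-w$ positions, in some order $\beta$.

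The desired arrangement within the left block is the restriction of $\sigma$ to $L$, which is a permutation of the order $\alpha$. By the induction hypothesis applied to the left child (of size $w<k$), some left-subtree collection realizes it. The same holds for the right child. Concatenating these masks in the fixed level-order gives $z$ with $\pi_z=\sigma$. The base case $k=1$ is trivial.

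The only delicate point is that child masks are interpreted relative to the post-sorting order, not the original positions. The induction hypothesis must therefore be stated for an arbitrary initial ordering of the block's copies, which is why it is phrased for every target $\sigma$ of whatever copies currently occupy the block. Because $\tau_x$ is a fixed deterministic function of $x$, the orders $\alpha$ and $\beta$ are well defined. The children then range over all reorderings of those orders, so the arbitrary tie-breaking among equal mask bits is harmless. This matches the remark before the lemma.

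Finally, decoding is deterministic, so $z\mapsto\pi_z$ is a well-defined map from $\mathcal Z_n$ to $S_n$. It is surjective by the argument above, and both sets have cardinality $n!$, so it is a bijection.

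As an alternative check on injectivity, suppose $z\neq z'$ and let $v$ be the first node in level order at which they differ. Up to $v$ the two decodings agree, so the block at $v$ holds the same copies in the same order. The masks $x_v\neq x'_v$ then send different sets of copies to the left child. Since the left child's copies all land in a fixed contiguous range of output positions, the final permutations differ.
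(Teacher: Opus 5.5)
Your proof is correct and follows essentially the paper's argument. In both, the set of copies occupying the first $w$ output slots fixes the root mask, the children realize the residual orderings by induction (stated relative to their incoming order), and the identity $k!=\binom{k}{w}w!(k-w)!$ gives $|\mathcal Z_n|=n!$. The only difference is packaging: the paper reads uniqueness directly off the inductive step, whereas you deduce injectivity from surjectivity plus the count, and also check it independently via the first differing node.
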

\begin{proof}
Temporarily distinguish the input positions by separate symbols.
For any target ordering of a size-$k$ block, the set of its first
$w=\lfloor k/2\rfloor$ symbols uniquely determines the root mask $x$.
The fixed sorting procedure then determines $\tau_x$ and the intermediate
ordering of each child.  Relative to these intermediate orderings,
the target specifies a unique residual permutation in each child.
Induction from a one-element block therefore gives exactly one mask
collection for each target permutation.

The number of choices obeys $k!=\binom{k}{w}w!(k-w)!$ at every node.
Multiplying over the tree cancels the internal factorials and gives
$|\mathcal Z_n|=\prod_v\binom{k_v}{w_v}=n!$.
\end{proof}

For $n=4$, the six root masks and two masks at each child give
$6\cdot2\cdot2=24$ distinct permutations.  At any fixed depth the blocks
are disjoint, so their total size is at most $n$.  Since the tree has at
most $\lceil\log n\rceil$ internal levels, the mask register uses
$N\leq n\lceil\log n\rceil$ qubits.  Equality holds for powers of two.

The uniform superposition of masks at a size-$k$ node is the Dicke state \cite{Bartschi2019Dicke,Bartschi2022Dicke,li2026dicke,chen2026}
\begin{equation}\label{eq:dicke-definition}
 |D_k^w\rangle=\binom{k}{w}^{-1/2}\sum_{|x|=w}|x\rangle.
\end{equation}
Let $P_v$ prepare this state on the mask register $D_v$ and let
$P:=\bigotimes_v P_v$.  All masks are prepared before any copies are
permuted.  Their joint state is
\begin{align}\label{eq:mask-preparation}
 P|0^N\rangle_D
 &=|\Omega_n\rangle_D
   =\bigotimes_v|D_{k_v}^{w_v}\rangle_{D_v}\nonumber\\
 &=\left(\prod_v\binom{k_v}{w_v}^{-1/2}\right)
      \sum_{z\in\mathcal Z_n}|z\rangle_D\nonumber\\
 &=\frac1{\sqrt{n!}}\sum_{z\in\mathcal Z_n}|z\rangle_D.
\end{align}
Together with Lemma~\ref{lem:subset-bijection}, this identifies uniform
mask amplitudes with uniform permutation amplitudes.

\subsection{Coherent preparation of the Dicke masks}
\label{subsec:dicke-preparation}

A sequential sampler prepares Eq.~\eqref{eq:dicke-definition} by keeping
track of the number of ones still to be placed.  Initialize a binary
counter $R$ to $w$ and the $k$ mask qubits to zero.  At position $i$, let
$s=k-i+1$ be the number of remaining positions.  Apply
\begin{equation}\label{eq:dicke-step}
 |r\rangle_R|0\rangle_{D_i}\longmapsto
 |r\rangle_R\left(
 \sqrt{\frac{s-r}{s}}|0\rangle+
 \sqrt{\frac{r}{s}}|1\rangle\right),
\end{equation}
and then reversibly update $R\leftarrow R-D_i$.  The parameter $s$ is
classical, while $r$ can be in superposition.  No measurement is performed.
The branches $r=0$ and $r=s$ apply an exact identity and an exact bit flip,
respectively.

For a fixed output string of weight $w$, the probability numerators from
its ones multiply to $w!$, those from its zeros multiply to $(k-w)!$,
and the denominators multiply to $k!$.  Its probability is consequently
$w!(k-w)!/k!=1/\binom{k}{w}$, independently of the locations of the ones.
All ideal amplitudes are positive.  After the last step $R=0$ on every
branch, so the counter disentangles and the mask is $|D_k^w\rangle$.

To implement Eq.~\eqref{eq:dicke-step}, compute the angle
$\theta_s(r)=2\arcsin\sqrt{r/s}$ into a temporary register, apply the
corresponding $R_y$ rotation using its binary expansion, and uncompute
the angle \emph{before} updating $R$.  Reversible fixed-point
arithmetic~\cite{Bennett1973,Haener2018Arithmetic} and Clifford+T
rotation synthesis~\cite{RossSelinger2016} implement these operations.
Endpoint branches are handled exactly.  In particular, a controlled
rotation is compiled using a half-angle circuit and its exact inverse,
so the inactive branch is exactly the identity.  The compiled sampler
therefore preserves the allowed Hamming weight and returns its counter
and arithmetic workspace to zero.

We detail the preparation error because it determines the precision
required in every mask.  Order the $N$ sampling steps arbitrarily subject
to their dependencies, and denote their ideal and compiled unitaries by
$G_j$ and $\widetilde G_j$.  Include clean temporary qubits implicitly.
It suffices to require
$\|(\widetilde G_j-G_j)|\chi\rangle\|\leq\eta$
for every normalized state $|\chi\rangle$ in the reachable input subspace
of step $j$, where $\eta:=\epsilon/(2N)$.  Let
$|\chi_0\rangle$ be the common initialized state.  Telescoping gives
\begin{align*}
 &(\widetilde G_N\cdots\widetilde G_1-G_N\cdots G_1)|\chi_0\rangle\\
 &\quad=\sum_{j=1}^{N}
 \widetilde G_N\cdots\widetilde G_{j+1}
 (\widetilde G_j-G_j)G_{j-1}\cdots G_1|\chi_0\rangle.
\end{align*}
Each ideal prefix produces a reachable normalized state, and each
compiled suffix is unitary.  The triangle inequality thus yields
\begin{equation}\label{eq:preparation-error}
 \bigl\|\,|\widetilde\Omega_n\rangle-|\Omega_n\rangle\,\bigr\|
 \leq\sum_{j=1}^{N}\eta=N\eta=\epsilon/2,
\end{equation}
where $|\widetilde\Omega_n\rangle:=\widetilde P|0^N\rangle$.

For a rotation, angle error contributes at most
$\|R_y(\widetilde\theta)-R_y(\theta)\|
\leq|\widetilde\theta-\theta|/2$.
Splitting the step error budget between angle evaluation and gate
synthesis requires $b=O(\log(N/\epsilon))=O(\log(n/\epsilon))$ precision
bits, including guard bits.  To evaluate the angle, reduce the arcsine
argument to $\sqrt{\min(r,s-r)/s}\leq1/\sqrt2$ and recover the angle by
symmetry when $r>s/2$.  On this interval, truncating the arcsine series
after $O(b)$ terms gives the required precision.  Horner evaluation uses
$O(b)$ multiplications of $O(b)$-bit numbers, and reversible shift-and-add
multiplication costs $O(b^2)$ Toffoli gates.  Including argument
evaluation, uncomputation, and rotation synthesis gives a conservative
T-count of $O(b^3)$ per sampling step.  Hence
\begin{equation}\label{eq:dicke-cost}
 T(\widetilde P)=O\!\left(n\log n\,\log^3\frac n\epsilon\right).
\end{equation}
Each size-$k$ preparation can use $O(k)$ temporary qubits, allowing all
mask preparations to run in parallel with $O(n\log n)$ total
workspace.\footnote{The arithmetic implementation stores $O(b^2)$
intermediate bits and is used when $b\leq\sqrt k$.  When
$\sqrt k<b\leq k$, use angle lookup with T-count
$O(k\log k+b^2)$ per step.  When $b>k$, use
counter-selected fixed rotations with T-count $O(kb)$ per step.
The latter two implementations use $O(k)$ workspace, and their
T-counts are $O(b^3)$ in their respective ranges.}

\subsection{Reversible sorting and controlled routing}
\label{subsec:controlled-routing}

\begin{definition}[Sorting network]
A sorting network on $k$ wires is a fixed sequence of pairwise
comparison--exchange switches.  A descending switch on wires $i,j$
replaces their keys $(u_i,u_j)$ by
$(\max(u_i,u_j),\min(u_i,u_j))$.  The network sorts every input list in
descending order.  Its wire pairs and their order are independent of
the input values, and its size $f(k)$ is the number of switches.
\end{definition}

\begin{lemma}[Sorting-network size~\cite{AKS1983}]
\label{lem:sorting-network-size}
For every $k\geq2$, there is a sorting network on $k$ numbers with
$f(k)=O(k\log k)$ switches.
\end{lemma}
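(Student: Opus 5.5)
The plan is to take the bound directly from the Ajtai--Koml\'os--Szemer\'edi construction~\cite{AKS1983} and check that it fits the Definition of a sorting network used here. We do not reprove AKS. We only verify three things: its comparators are oblivious (data-independent), it sorts every input, and its size is $O(k\log k)$ for every $k\geq2$, not just for special values of $k$.

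First, recall what AKS provides. For every $k$ there is a comparator network of depth $O(\log k)$ that sorts all inputs. Each layer is a set of disjoint comparators, so it contains at most $\lfloor k/2\rfloor$ switches. Multiplying depth by width gives
\[
f(k)\leq \lfloor k/2\rfloor\cdot O(\log k)=O(k\log k).
\]
Second, match the orientation. AKS sorts in ascending order. Relabeling each comparator to place the maximum on the lower-indexed wire, or equivalently reversing the order of the output wires, gives a descending network of the same size. The wire pairs are fixed in advance, so this is a sorting network in the sense of the Definition.

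Third, handle arbitrary $k$. AKS is often stated for $k$ a power of two, or for $k$ large enough. If that is the form we use, we pad $k$ wires up to $k'=2^{\lceil\log k\rceil}<2k$ with dummy keys equal to $-\infty$, which end in the last positions. Deleting every comparator that touches a dummy wire leaves a valid network on the original $k$ wires. The dummies never move a real key past another real key in the wrong direction, so the pruned network still sorts. Its size is at most $f(k')=O(k\log k)$. Small values of $k$, below any threshold in the AKS analysis, contribute only a constant and are absorbed in the $O(\cdot)$. For our masks the keys lie in $\{0,1\}$, so by the zero--one principle it is enough that the network sorts all binary inputs, which AKS already guarantees.

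The only delicate point is the pruning argument in the padding step. We must make sure that dropping the comparators that touch dummy wires preserves correctness. With $-\infty$ padding every such comparator acts trivially on real keys, so this should go through.

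For practical constants we would also note Batcher's odd--even merge network. It has size $O(k\log^2k)$, which would add only one extra logarithmic factor to the T-count and so would still give the $\widetilde O(n)$ bound of Theorem~\ref{thm:symmetric-projection-synthesis}.
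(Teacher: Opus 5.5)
Your proposal is correct and follows the paper's proof. Both cite the Ajtai--Koml\'os--Szemer\'edi network, bound its size by $O(\log k)$ layers times at most $\lfloor k/2\rfloor$ disjoint switches per layer, and reverse the comparison convention to get descending order. Your padding-and-pruning step for general $k$ is valid but is extra care the paper leaves implicit, not a different route: with $-\infty$ dummies on the appended wires and every comparator sending the maximum to its lower-indexed wire, the dummies never move.
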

\begin{proof}
The construction of Ajtai, Koml\'{o}s, and Szemer\'{e}di has
$O(\log k)$ layers, each containing at most $\lfloor k/2\rfloor$
disjoint switches.  Its total size is therefore $O(k\log k)$.
Reversing the comparison convention gives descending order.
\end{proof}

Fix such a network for each block size.  Its fixed wiring also sorts
binary keys, so it can be applied coherently to the computational
branches of a Dicke mask using reversible comparison and
history recording~\cite{Berry2018Sorting}.  A binary quantum comparator acts on two mask
bits $a,b$ and a control bit $t$.  Define $t':=t\oplus(\neg a)b$ and
implement the reversible map
\begin{equation*}
 \mathrm{Comp}:\ |a,b,t\rangle
 \longmapsto |a\oplus t',\ b\oplus t',\ t'\rangle.
\end{equation*}
Thus the record is updated first, and its updated value controls the
two bit flips.  When $t=0$, the first two outputs are
$(\max(a,b),\min(a,b))$, and $t'=1$ precisely for the input $(a,b)=(0,1)$.
This comparator uses one Toffoli, with a negated control implemented by
Pauli $X$ gates, followed by two CNOTs.  Figure~\ref{fig:quantum-comparator}
shows both its three-wire block and its elementary decomposition.

\begin{figure}[tbp]
\centering
\begin{tikzpicture}[x=1cm,y=1cm,
 every node/.style={font=\footnotesize},
 qwire/.style={draw=black,line width=.5pt},
 qgate/.style={draw=blue!55!black,fill=blue!3,line width=.6pt}]
\node[anchor=west] at (0,4.0) {(a) Three-wire comparator};
\foreach \yy/\inlabel in {3.35/a,2.65/b,1.95/t}{
 \draw[qwire] (0.6,\yy)--(7.9,\yy);
 \node[anchor=east] at (0.5,\yy) {$\inlabel$};
}
\draw[qgate] (2.2,1.6) rectangle (3.9,3.7);
\node at (3.05,2.65) {$\mathrm{Comp}$};
\node[fill=white,inner sep=2pt] at (6.15,3.35) {$a\oplus t'$};
\node[fill=white,inner sep=2pt] at (6.15,2.65) {$b\oplus t'$};
\node[fill=white,inner sep=2pt] at (6.15,1.95) {$t'=t\oplus(\neg a)b$};
\node[anchor=west] at (0,1.05) {(b) Elementary implementation};
\foreach \yy/\inlabel in {0.45/a,-0.2/b,-0.85/t}{
 \draw[qwire] (0.6,\yy)--(7.9,\yy);
 \node[anchor=east] at (0.5,\yy) {$\inlabel$};
}
\draw[qwire] (2.0,-0.85)--(2.0,0.45);
\draw[fill=white] (2.0,0.45) circle (2.7pt);
\fill (2.0,-0.2) circle (2pt);
\draw[fill=white] (2.0,-0.85) circle (.09cm);
\draw (1.91,-0.85)--(2.09,-0.85);
\draw (2.0,-0.94)--(2.0,-0.76);
\foreach \xx/\yy in {3.55/0.45,4.75/-0.2}{
 \draw[qwire] (\xx,-0.85)--(\xx,\yy);
 \fill (\xx,-0.85) circle (2pt);
 \draw[fill=white] (\xx,\yy) circle (.09cm);
 \draw (\xx-.09,\yy)--(\xx+.09,\yy);
 \draw (\xx,\yy-.09)--(\xx,\yy+.09);
}
\node[fill=white,inner sep=2pt] at (6.7,0.45) {$a\oplus t'$};
\node[fill=white,inner sep=2pt] at (6.7,-0.2) {$b\oplus t'$};
\node[fill=white,inner sep=2pt] at (6.7,-0.85) {$t'$};
\end{tikzpicture}
\caption{Quantum comparator.  The symbol $\oplus$ denotes addition
modulo two.  The record is updated to $t'=t\oplus(\neg a)b$ before
the two key bits are flipped under its control.  With input $t=0$,
the key outputs are $\max(a,b),\min(a,b)$.  The open control denotes
control on zero.  The inverse comparator reverses the displayed gates.}
\label{fig:quantum-comparator}
\end{figure}

For switch $j$ on a size-$k$ mask, use a fresh bit $t_j$ of the control
register $C$, initially zero.  Immediately after $\mathrm{Comp}_j$,
apply one SWAP to the corresponding pair of copies, controlled by the
recorded value $t_j$.  The next comparator acts on the updated mask.
After the last switch, apply
$\mathrm{Comp}_{f(k)}^\dagger,\ldots,\mathrm{Comp}_1^\dagger$
to the mask and controls only.  These inverse comparators restore the
original mask and return every record bit to zero.  The data swaps are
not included in this inverse pass, as shown explicitly in
Fig.~\ref{fig:clean-partition}.

\begin{figure*}[t]
\centering
\begin{tikzpicture}[x=1cm,y=1cm,
  every node/.style={font=\fontsize{8.5}{10}\selectfont},
  line cap=round,line join=round,
  dline/.style={draw=figblue!80,line width=.60pt},
  cline/.style={draw=figgray!80,line width=.55pt},
  aline/.style={draw=figink,line width=.65pt},
  swapline/.style={draw=figamber,line width=.80pt},
  pbox/.style={draw=figblue,fill=figblue!5,rounded corners=1.5pt,
    line width=.65pt,minimum width=.90cm,minimum height=.52cm,inner sep=3pt},
  cmpbox/.style={draw=figblue!85,fill=figblue!4,rounded corners=1.5pt,
    line width=.65pt},
  invbox/.style={draw=figgray!85,fill=figgray!5,rounded corners=1.5pt,
    line width=.65pt},
  tag/.style={font=\sffamily\fontsize{8}{9.5}\selectfont,text=figgray},
  panel/.style={font=\sffamily\bfseries\fontsize{9.5}{11}\selectfont,
    text=figink,anchor=west},
  state/.style={fill=white,inner sep=1.5pt}]
\definecolor{figblue}{HTML}{315F86}
\definecolor{figteal}{HTML}{39796D}
\definecolor{figamber}{HTML}{B77530}
\definecolor{figgray}{HTML}{657180}
\definecolor{figink}{HTML}{202C3A}

\node[panel] at (-.20,1.43) {(a)\quad LCU block encoding};
\node[tag,text=figblue] at (2.55,.79) {PREPARE};
\node[tag,text=figteal] at (7.45,.79) {RECURSIVE ROUTING};
\node[tag,text=figblue] at (12.15,.79) {UNPREPARE};
\node[tag] at (15.12,.79) {ACCEPT};
\foreach \yy/\label/\style in {0/D/dline,-.55/C/cline,-1.10/A/aline}{
  \draw[\style] (.20,\yy)--(16.45,\yy);
  \node[anchor=east] at (.02,\yy) {$\label$};
}
\foreach \yy/\bits/\style in {0/8/dline,-.55/5/cline}{
  \draw[\style] (.30,\yy-.09)--(.42,\yy+.09);
  \node[anchor=south west,font=\sffamily\fontsize{6.5}{7}\selectfont,
    text=figgray,inner sep=.5pt] at (.43,\yy+.04) {$\bits$};
}
\draw[aline] (.30,-1.19)--(.42,-1.01);
\node[state] at (.95,0) {$|0^8\rangle$};
\node[state] at (.95,-.55) {$|0^5\rangle$};
\node[state] at (.95,-1.10) {$|\Phi\rangle$};
\node[pbox] at (2.55,0) {$P$};
\node[state] at (3.64,0) {$|\Omega_4\rangle$};
\draw[draw=figteal,fill=figteal!4,rounded corners=2pt,line width=.70pt]
  (4.45,.28) rectangle (10.45,-1.38);
\node[text=figteal,font=\fontsize{11}{12}\selectfont] at (7.45,-.12) {$S$};
\foreach \xx/\gate in {5.35/Q_4,7.45/{Q_{2,L}},9.55/{Q_{2,R}}}{
  \node[draw=figteal!60,fill=white,rounded corners=1.5pt,
    minimum width=1.13cm,minimum height=.48cm,inner sep=2pt]
    at (\xx,-.88) {$\gate$};
}
\draw[->,>=stealth,draw=figteal!80,line width=.6pt] (5.98,-.88)--(6.81,-.88);
\draw[->,>=stealth,draw=figteal!80,line width=.6pt] (8.08,-.88)--(8.91,-.88);
\node[pbox] at (12.15,0) {$P^\dagger$};
\node[state] at (15.12,0) {$\langle0^8|$};
\node[state] at (15.12,-.55) {$|0^5\rangle$};
\node[state] at (15.12,-1.10) {$\Pi_{4,d}|\Phi\rangle$};
\node[font=\fontsize{8}{9.5}\selectfont,text=figgray,anchor=west]
  at (.25,-1.86)
  {$D=D_4D_{2,L}D_{2,R}$,\qquad
   $|\Omega_4\rangle=|D_4^2\rangle\otimes|D_2^1\rangle\otimes|D_2^1\rangle$};

\begin{scope}[yshift=-3.40cm]
\node[panel] at (-.20,1.06) {(b)\quad Root partition $Q_4$};
\node[tag,text=figblue] at (6.35,.56) {COMPARE AND ROUTE};
\node[tag] at (13.55,.56) {UNCOMPUTE HISTORY};
\foreach \yy/\label/\style in {
  0/{D_4}/dline,-.60/C/cline,
  -1.55/{A_1}/aline,-2.10/{A_2}/aline,
  -2.65/{A_3}/aline,-3.20/{A_4}/aline}{
  \draw[\style] (.20,\yy)--(16.45,\yy);
  \node[anchor=east] at (.02,\yy) {$\label$};
}
\foreach \yy/\bits/\style in {0/4/dline,-.60/5/cline}{
  \draw[\style] (.30,\yy-.09)--(.42,\yy+.09);
  \node[anchor=south west,font=\sffamily\fontsize{6.5}{7}\selectfont,
    text=figgray,inner sep=.5pt] at (.43,\yy+.04) {$\bits$};
}
\node[state] at (.95,0) {$|D_4^2\rangle$};
\node[state] at (.95,-.60) {$|0^5\rangle$};
\foreach \xx/\pair/\jj/\ya/\yb in {
  2.12/12/1/-1.55/-2.10,
  4.12/34/2/-2.65/-3.20,
  6.12/13/3/-1.55/-2.65,
  8.12/24/4/-2.10/-3.20,
  10.12/23/5/-2.10/-2.65}{
  \draw[cmpbox] (\xx-.57,.23) rectangle (\xx+.57,-.83);
  \node[text=figblue] at (\xx,-.30) {$\mathrm{Comp}_{\pair}$};
  \pgfmathsetmacro{\sx}{\xx+.84}
  \draw[swapline] (\sx,-.60)--(\sx,\yb);
  \fill[figamber] (\sx,-.60) circle (2.0pt);
  \node[font=\fontsize{7.5}{9}\selectfont,text=figamber,
    anchor=south west,inner sep=1pt] at (\sx+.045,-.57) {$t_{\jj}$};
  \foreach \yy in {\ya,\yb}{
    \draw[swapline] (\sx-.075,\yy-.075)--(\sx+.075,\yy+.075);
    \draw[swapline] (\sx-.075,\yy+.075)--(\sx+.075,\yy-.075);
  }
}
\node[state,font=\fontsize{8}{9}\selectfont] at (11.83,0) {$|1100\rangle$};
\draw[invbox] (12.70,.23) rectangle (14.40,-.83);
\node[font=\sffamily\fontsize{8}{9.5}\selectfont,text=figgray]
  at (13.55,-.12) {Uncompute};
\node[font=\fontsize{8}{9.5}\selectfont,text=figgray]
  at (13.55,-.52) {$5\rightarrow1$};
\node[state] at (15.30,-.60) {$|0^5\rangle$};
\end{scope}

\begin{scope}[yshift=-8.14cm]
\node[panel] at (-.20,.84) {(c)\quad Child partitions};
\node[anchor=east,font=\fontsize{8}{9.5}\selectfont,text=figgray]
  at (16.45,.84)
  {$|D_2^1\rangle=(|01\rangle+|10\rangle)/\sqrt2$};
\foreach \offset/\side/\first/\second in {0/L/1/2,8.75/R/3/4}{
  \begin{scope}[xshift=\offset cm]
  \node[text=figteal,font=\fontsize{9}{10}\selectfont] at (3.92,.30)
    {$Q_{2,\side}$};
  \foreach \yy/\label/\style in {
    -.30/{D_{2,\side}}/dline,-.90/C/cline,
    -1.75/{A_{\first}}/aline,-2.30/{A_{\second}}/aline}{
    \draw[\style] (.20,\yy)--(7.70,\yy);
    \node[anchor=east] at (.02,\yy) {$\label$};
  }
  \foreach \yy/\bits/\style in {-.30/2/dline,-.90/5/cline}{
    \draw[\style] (.30,\yy-.09)--(.42,\yy+.09);
    \node[anchor=south west,font=\sffamily\fontsize{6.5}{7}\selectfont,
      text=figgray,inner sep=.5pt] at (.43,\yy+.04) {$\bits$};
  }
  \node[state] at (.96,-.30) {$|D_2^1\rangle$};
  \node[state] at (.96,-.90) {$|0^5\rangle$};
  \draw[cmpbox] (2.03,-.07) rectangle (3.17,-1.13);
  \node[text=figblue] at (2.60,-.60) {$\mathrm{Comp}$};
  \draw[swapline] (3.58,-.90)--(3.58,-2.30);
  \fill[figamber] (3.58,-.90) circle (2pt);
  \node[font=\fontsize{7.5}{9}\selectfont,text=figamber,
    anchor=south west,inner sep=1pt] at (3.625,-.87) {$t_1$};
  \foreach \yy in {-1.75,-2.30}{
    \draw[swapline] (3.505,\yy-.075)--(3.655,\yy+.075);
    \draw[swapline] (3.505,\yy+.075)--(3.655,\yy-.075);
  }
  \draw[invbox] (4.63,-.07) rectangle (5.87,-1.13);
  \node[text=figgray] at (5.25,-.60) {$\mathrm{Comp}^\dagger$};
  \node[state] at (6.84,-.90) {$|0^5\rangle$};
  \end{scope}
}
\end{scope}
\end{tikzpicture}
\caption{Permutation-based LCU for four copies.  (a) $P$, $S$, and
$P^\dagger$ implement the accepted operator $\Pi_{4,d}$.
(A slash marks a bundled multi-qubit wire.)
(b,c) The mask inputs are the prepared Dicke states $|D_4^2\rangle$
and $|D_2^1\rangle$.  Routing can entangle masks with data, so no separate
output ket is assigned to a mask wire.  (b) Each recorded $t_j$ controls
the following amber data SWAP.  Inverse comparators $5,4,3,2,1$ clear
$C$ while retaining the data permutation.
(c) Each child compares, swaps, and uncomputes.  The order in (a)
allows reuse of $C$; inverse gates act only on masks and controls.}
\label{fig:clean-partition}
\end{figure*}

To verify the resulting block operation, let $h_k(x)$ denote the
comparison history produced on input mask $x$.  The forward pass gives
\begin{equation*}
 |x\rangle_D|0^{f(k)}\rangle_C|\Phi\rangle_A
 \longmapsto
 |1^w0^{k-w}\rangle_D|h_k(x)\rangle_C
 P_k(\tau_x)|\Phi\rangle_A.
\end{equation*}
The inverse comparator pass acts on the first two factors and yields
the clean partition
\begin{equation}\label{eq:clean-partition}
 Q_k:\ |x\rangle_D|0\rangle_C|\Phi\rangle_A
 \longmapsto |x\rangle_D|0\rangle_C
 P_k(\tau_x)|\Phi\rangle_A.
\end{equation}
Linearity extends this identity to superposed masks and to copies
entangled with other systems.  In particular, the restored mask remains
available for the final LCU unpreparation.

Each network switch induces one forward comparator, one data C-SWAP,
and one inverse comparator.  A qubit C-SWAP decomposes into one Toffoli
and two CNOTs.  Counting three Toffolis per switch and using the standard
seven-T Toffoli decomposition~\cite{Selinger2013} gives
$T(Q_k)\leq21f(k)=O(k\log k)$, with $f(k)$ control qubits for that block.

Apply Eq.~\eqref{eq:clean-partition} at each internal node, with parents
preceding their children.  The resulting unitary $S$ satisfies
\begin{equation}\label{eq:mask-select}
 S:\ |z\rangle_D|0\rangle_C|\Phi\rangle_A
 \longmapsto |z\rangle_D|0\rangle_C
 P_n(\pi_z)|\Phi\rangle_A.
\end{equation}
For $n=2^h$, Lemma~\ref{lem:sorting-network-size} bounds the total number
of network switches by
\begin{align}\label{eq:routing-cost}
 \sum_v f(k_v)
 &=\sum_{\ell=0}^{h-1}2^\ell
       O\!\left(\frac n{2^\ell}\log\frac n{2^\ell}\right)\nonumber\\
 &=O\!\left(n\sum_{\ell=0}^{h-1}(h-\ell)\right)
   =O(n\log^2 n).
\end{align}
For arbitrary $n$, the tree still has $O(\log n)$ levels, each of total
block size at most $n$, giving the same bound.  Processing blocks
successively lets the control register be reused; its peak size is
$\max_v f(k_v)=O(n\log n)$.

\subsection{Implementing the symmetric projector}
\label{subsec:projection-synthesis}

We now derive the complete LCU operation~\cite{Gilyen2019QSVT}.
Suppress the control and arithmetic registers that start and end in
zero, and write $S$ for its restriction to this clean-workspace
subspace.  Equation~\eqref{eq:mask-select} says that, on valid masks,
$S$ acts as $\sum_{z\in\mathcal Z_n}|z\rangle\!\langle z|_D
\otimes P_n(\pi_z)$, with a unitary extension on the remaining basis
states.  Define $U_{n,d}:=(P^\dagger\otimes I_A)S(P\otimes I_A)$.
For an arbitrary input vector $|\Phi\rangle_A$, the preparation in
Eq.~\eqref{eq:mask-preparation} and the routing give
\begin{align*}
 |0^N\rangle_D|\Phi\rangle_A
 &\xrightarrow{P\otimes I_A}
 \frac1{\sqrt{n!}}\sum_{z\in\mathcal Z_n}|z\rangle_D|\Phi\rangle_A\\
 &\xrightarrow{S}
 \frac1{\sqrt{n!}}\sum_{z\in\mathcal Z_n}
 |z\rangle_D P_n(\pi_z)|\Phi\rangle_A\\
 &\xrightarrow{P^\dagger\otimes I_A}
 \frac1{\sqrt{n!}}\sum_{z\in\mathcal Z_n}
 P^\dagger|z\rangle_D P_n(\pi_z)|\Phi\rangle_A.
\end{align*}
For every valid $z$, preparation implies
$\langle0^N|P^\dagger|z\rangle=1/\sqrt{n!}$.
Projecting the mask onto zero therefore gives
\begin{align}\label{eq:projector-block}
 &(\langle0^N|_D\otimes I_A)U_{n,d}
       (|0^N\rangle_D\otimes|\Phi\rangle_A)\nonumber\\
 &\quad=\frac1{\sqrt{n!}}\sum_{z\in\mathcal Z_n}
 \langle0^N|P^\dagger|z\rangle
 P_n(\pi_z)|\Phi\rangle_A\nonumber\\
 &\quad=\frac1{n!}\sum_{z\in\mathcal Z_n}
 P_n(\pi_z)|\Phi\rangle_A
 =\Pi_{n,d}|\Phi\rangle_A.
\end{align}
The last equality uses Lemma~\ref{lem:subset-bijection}.  Since
$|\Phi\rangle_A$ is arbitrary, Eq.~\eqref{eq:projector-block} proves that
the encoded operator is $\Pi_{n,d}$, with normalization one.
Unpreparation does not deterministically reset the masks after routing;
the all-zero measurement outcome selects this block.  On input
$\rho_\psi^{\otimes n}$, the unnormalized accepted state is
$\Pi_{n,d}\rho_\psi^{\otimes n}\Pi_{n,d}$, with probability $p_n$.
Discarding all but one copy then implements the purification map of
Sec.~\ref{subsec:symmetric-projection-protocol}.

For the error analysis, use the compiled preparation $\widetilde P$
and its exact circuit inverse, so
$\widetilde U_{n,d}:=(\widetilde P^\dagger\otimes I_A)
S(\widetilde P\otimes I_A)$.
The ideal and compiled encoded blocks can be written as
\begin{align*}
 \Pi_{n,d}&=(\langle\Omega_n|\otimes I_A)
        S(|\Omega_n\rangle\otimes I_A),\\
 \widetilde\Pi_{n,d}&=(\langle\widetilde\Omega_n|\otimes I_A)
        S(|\widetilde\Omega_n\rangle\otimes I_A).
\end{align*}
Set $|e\rangle:=|\widetilde\Omega_n\rangle-|\Omega_n\rangle$.
Adding and subtracting the term with compiled bra and ideal ket gives
\begin{align*}
 \widetilde\Pi_{n,d}-\Pi_{n,d}
 &=(\langle e|\otimes I_A)S(|\Omega_n\rangle\otimes I_A)\\
 &\quad+(\langle\widetilde\Omega_n|\otimes I_A)
             S(|e\rangle\otimes I_A).
\end{align*}
For any vector $|v\rangle$, the map $|v\rangle\otimes I_A$ has operator
norm $\|v\|$.  Both preparation states have norm one, and $\|S\|=1$.
Submultiplicativity and Eq.~\eqref{eq:preparation-error} therefore imply
\begin{equation}\label{eq:block-error}
 \|\widetilde\Pi_{n,d}-\Pi_{n,d}\|
 \leq\|e\|+\|e\|\leq\epsilon.
\end{equation}
This is an operator-norm guarantee for the accepted Kraus operator.
It is independent of the local dimension and introduces no additional
approximation in the sorting or data swaps.

\subsection{Total resources}
\label{subsec:total-resources}

\upperbound*
\begin{proof}
Encode each input copy in $q:=\lceil\log d\rceil$ qubits.  Swapping two
copies under one control then uses $q$ qubit C-SWAPs; mask preparation
and comparison are unchanged.  Including all $m$ auxiliary qubits,
Eq.~\eqref{eq:block-error} gives
\begin{equation*}
 \bigl\|(\langle0^m|\otimes I_A)\widetilde U_{n,d}
       (|0^m\rangle\otimes I_A)-\Pi_{n,d}\bigr\|\leq\epsilon.
\end{equation*}
The two preparation passes have the cost in Eq.~\eqref{eq:dicke-cost}.
For routing, each network switch has T-count at most $7(q+2)$:
two Toffolis for the forward and inverse comparators and $q$ for the
controlled data swaps.  Equation~\eqref{eq:routing-cost} thus gives
\begin{align*}
 T(\widetilde U_{n,d})
 &=2T(\widetilde P)+T(S)\\
 &=O\!\left(n\log n\,\log^3\frac n\epsilon
       +n\log^2n\,\log d\right)
 =\widetilde O(n).
\end{align*}
The masks, the reusable control register, and the parallel preparation
workspace each occupy $O(n\log n)$ qubits.  Their sum is
$m=O(n\log n)$, in addition to the $n\lceil\log d\rceil$ data qubits.
\end{proof}

\section{Lower bounds}\label{sec4}

We prove a linear lower bound for the operator-synthesis task in
Sec.~\ref{sec3}: approximating the symmetric projector in operator norm.
The proof uses the stabilizer-nullity framework of
Beverland et al.~\cite{Beverland2020Nullity} and its unitary formulation
by Jiang and Wang~\cite{JiangWang2023Nullity}.  We first explain how
nullity bounds the cost of a postselected circuit.  We then show that
the qubit symmetrizer has only four Pauli symmetries, separated from
all other Pauli pairs by a constant operator-norm gap.  This gap makes
the lower bound robust to approximation.  Finally, restricting each
input copy to two computational-basis levels extends the result to
every local dimension $d\geq2$.

Throughout this section, the T-count $T(V)$ counts both T and T$^\dagger$ gates
in the implemented Clifford+T circuit $V$.  Clifford operations and
the preparation of stabilizer states are free.  We use $\Pi_{n,d}$ for the
symmetric projector throughout; in the qubit argument, this becomes
$\Pi_{n,2}$.
The norm $\|\cdot\|$ is the operator norm; $\|\cdot\|_{\mathrm F}$
denotes the Frobenius norm.

\subsection{Stabilizer nullity and postselection}
\label{subsec:nullity-postselection}

Let $\mathcal Q_r:=\{I,X,Y,Z\}^{\otimes r}$ be the set of $r$-qubit
Pauli strings, and let $\widehat{\mathcal P}_r$ be the Pauli group
obtained by adjoining the phases $\{1,-1,i,-i\}$.  For a nonzero pure
state $|\psi\rangle$ on $r$ qubits, define
\begin{align*}
 \operatorname{Stab}(\psi)
 &:=\{Q\in\widehat{\mathcal P}_r:Q|\psi\rangle=|\psi\rangle\},\\
 \nu_s(\psi)&:=r-\log_2|\operatorname{Stab}(\psi)|.
\end{align*}
Its stabilizer group is abelian and contains no nonidentity scalar.
Consequently, it has $2^s$ elements for some $0\leq s\leq r$;
$\nu_s$ counts the number of qubits not fixed by independent Pauli
stabilizers.  In particular, a pure stabilizer state has nullity zero.

For an $r$-qubit unitary $W$, the unitary stabilizer nullity counts
the Pauli strings that remain Pauli under conjugation:
\begin{align*}
 s_u(W)&:=\bigl|\{Q\in\mathcal Q_r:
       WQW^\dagger\in\widehat{\mathcal P}_r\}\bigr|,\\
 \nu_u(W)&:=2r-\log_2 s_u(W).
\end{align*}
Jiang and Wang~\cite{JiangWang2023Nullity} show that
\begin{gather*}
 \nu_u(W)=\nu_s\bigl((W\otimes I)|\Phi_r\rangle\bigr)
 \leq T(W),\\
 |\Phi_r\rangle:=2^{-r/2}\sum_{x\in\{0,1\}^r}|x\rangle|x\rangle.
\end{gather*}
Thus unitary nullity can be evaluated through the state's Pauli
stabilizers after applying the unitary to one half of Bell pairs.
Our target $\Pi_{n,2}$ is a projector, so we use this Choi-state viewpoint
on the accepted block of its unitary implementation.  The following
fact justifies the required postselection.

\begin{lemma}[Nullity bound with postselection]
\label{lem:nullity-postselection}
Suppose a Clifford+T circuit starts in a pure stabilizer state and
uses $t$ gates from $\{T,T^\dagger\}$.  If computational-basis
postselection with nonzero probability leaves a pure state
$|\phi\rangle$ on the unmeasured qubits, then $t\geq\nu_s(\phi)$.
\end{lemma}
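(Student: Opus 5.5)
The plan is to track Pauli stabilizers through the circuit and use that postselection never shrinks the stabilizer group. Let the circuit act on $r$ qubits, start from a pure stabilizer state $|s_0\rangle$, and write it as a sequence of Clifford gates and $t$ gates from $\{T,T^\dagger\}$. Let $|\phi_{\mathrm{pre}}\rangle$ be the pure state just before postselection. The proof has three steps.

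\textbf{Step 1: nullity before measurement.} We show $\nu_s(\phi_{\mathrm{pre}})\le t$.
\begin{itemize}
\item Clifford gates conjugate Paulis to Paulis bijectively, so they preserve $|\operatorname{Stab}|$ and hence $\nu_s$.
\item Each $T$ or $T^\dagger$ acts on one qubit $j$. Any stabilizer $Q$ of the current state whose factor on qubit $j$ lies in $\{I,Z\}$ commutes with the gate, so it remains a stabilizer.
\item The elements with factor in $\{I,Z\}$ on qubit $j$ form a subgroup of index at most two (it is the kernel of a homomorphism to $\mathbb Z_2$). So $\log_2|\operatorname{Stab}|$ drops by at most one, and $\nu_s$ increases by at most one per $T$ gate.
\item The initial state has $\nu_s=0$, so induction gives $\nu_s(\phi_{\mathrm{pre}})\le t$.
\end{itemize}
This is the standard argument of Beverland et al.~\cite{Beverland2020Nullity}, and we could simply cite it; the monotonicity in Step 2 is the only extra ingredient.

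\textbf{Step 2: nullity does not increase under computational-basis postselection.} Postselect qubits $M$ on outcome $y$, with nonzero probability, so that $(\langle y|_M\otimes I)|\phi_{\mathrm{pre}}\rangle=c|\phi\rangle$ with $c\neq0$. It suffices to treat one measured qubit at a time. Then:
\begin{itemize}
\item Append the stabilizer $(-1)^{y_j}Z_j$. The resulting post-measurement state $|y_j\rangle\otimes|\phi'\rangle$ on all $r$ qubits is stabilized by the group generated by $(-1)^{y_j}Z_j$ and the subgroup $G_0\subseteq\operatorname{Stab}(\phi_{\mathrm{pre}})$ of elements commuting with $Z_j$. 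To check this, let $g\in G_0$ and write $g=g_j\otimes g'$ with $g_j\in\{I,Z\}$ up to phase. Then $g$ commutes with the projector $|y_j\rangle\langle y_j|$, so it stabilizes the projected state.
\item $G_0$ has index at most two in $\operatorname{Stab}(\phi_{\mathrm{pre}})$. If $\pm Z_j\notin G_0$, adding $Z_j$ doubles the group. If $\pm Z_j\in G_0$, then $G_0=\operatorname{Stab}(\phi_{\mathrm{pre}})$. In either case the stabilizer group of the post-measurement $r$-qubit state has size at least $|\operatorname{Stab}(\phi_{\mathrm{pre}})|$.
\item Pass to the remaining $r-1$ qubits. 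The stabilizer group of $|y_j\rangle\otimes|\phi'\rangle$ factorizes as $\langle\pm Z_j\rangle\times\operatorname{Stab}(\phi')$: any stabilizer has $I$ or $Z$ on qubit $j$, because $X$ or $Y$ there would map $|y_j\rangle$ to an orthogonal state. So $|\operatorname{Stab}(\phi')|$ is half the post-measurement size, and
\[\nu_s(\phi')=(r-1)-\log_2|\operatorname{Stab}(\phi')|\le r-\log_2|\operatorname{Stab}(\phi_{\mathrm{pre}})|=\nu_s(\phi_{\mathrm{pre}}).\]
\end{itemize}
Iterating over all measured qubits gives $\nu_s(\phi)\le\nu_s(\phi_{\mathrm{pre}})\le t$.

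\textbf{Main obstacle.} The delicate points are the phases, and making sure the kept stabilizers are stabilizers of the normalized $|\phi\rangle$ rather than of an operator that annihilates it. I would handle this by working with the projector $\Pi_y=|y\rangle\langle y|_M$ and using the commutation $g\Pi_y=\Pi_y g$ for $g\in G_0$. Then $g\,\Pi_y|\phi_{\mathrm{pre}}\rangle=\Pi_y|\phi_{\mathrm{pre}}\rangle\neq0$, which rules out phase $-1$ or $\pm i$ artifacts.
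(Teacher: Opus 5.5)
Your proposal is correct and takes essentially the same approach as the paper. The $T$-gate step, using the index-at-most-two subgroup that commutes with $Z_j$, is identical. Your postselection step (adjoin $(-1)^{y_j}Z_j$ to $G_0$, then factor off qubit $j$) repackages the paper's restriction homomorphism on $S_0$, whose kernel lies in $\{I,Z_j\}$, with the same case split on whether $Z_j\in S$; treating outcome $y_j=1$ through the sign $(-1)^{y_j}$ instead of conjugating by $X_j$ is only a cosmetic difference.
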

\begin{proof}
Clifford conjugation is a bijection of the Pauli group, so it preserves
the size of the stabilizer group and hence $\nu_s$.  Let
$S=\operatorname{Stab}(\psi)$ immediately before a T or
T$^\dagger$ gate on qubit $j$.  The subgroup
$S_0:=\{Q\in S:[Q,Z_j]=0\}$ has index at most two in $S$: the
commutation sign with $Z_j$ is a homomorphism from $S$ to
$\{1,-1\}$.  Every member of $S_0$ commutes with the gate and
remains a stabilizer afterwards.  Thus one T or T$^\dagger$
can increase $\nu_s$ by at most one.

Now postselect qubit $j$ onto $|0\rangle$, with nonzero remaining
state $|\phi\rangle$.  Each element of $S_0$ restricts to a Pauli
stabilizer of $|\phi\rangle$.  The restriction homomorphism has
kernel contained in $\{I,Z_j\}$.  If the kernel is trivial, its
image has size $|S_0|\geq|S|/2$.  If the kernel has size two, then
$Z_j\in S$.  Since $S$ is abelian, this implies $S_0=S$, and the
image again has size $|S|/2$.  In either case,
\begin{align*}
 |\operatorname{Stab}(\phi)|&\geq|S|/2,\\
 \nu_s(\phi)&\leq(r-1)-\log_2(|S|/2)=\nu_s(\psi).
\end{align*}
Postselection on $|1\rangle$ reduces to this case by a Clifford
$X_j$, and repeated application handles several measured qubits.
Starting from nullity zero and applying these bounds gate by gate
proves the claim.  Tensoring with a pure stabilizer state also
preserves nullity: a Clifford maps that factor to $|0\rangle^{\otimes w}$,
whose $2^w$ stabilizers compensate for its $w$ additional qubits.
\end{proof}

\subsection{Pauli symmetries of the qubit symmetrizer}
\label{subsec:pauli-symmetrizer}

A Pauli string is \emph{collective} if it has the form
$\sigma^{\otimes n}$ for $\sigma\in\{I,X,Y,Z\}$.  Such a string
acts identically on every copy and commutes with every permutation,
and hence with $\Pi_{n,2}$.  The next lemma shows that all other Pauli
pairs have a finite gap from being a two-sided symmetry of $\Pi_{n,2}$.
Set $\Delta_n:=\sqrt{1-1/n}$.

\begin{lemma}[Pauli symmetry gap]
\label{lem:pauli-symmetry-gap}
Let $n\geq2$, $Q,R\in\mathcal Q_n$, and $|\omega|=1$.
Unless $Q=R=\sigma^{\otimes n}$ for some
$\sigma\in\{I,X,Y,Z\}$, one has
$\|Q\Pi_{n,2} R^{\mathsf T}-\omega\Pi_{n,2}\|\geq\Delta_n$.
\end{lemma}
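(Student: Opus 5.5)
The plan is to lower-bound $\|M\|$, where $M:=Q\Pi_{n,2}R^{\mathsf T}-\omega\Pi_{n,2}$, by evaluating $\|Mv\|$ on well-chosen unit vectors $v$ in the symmetric subspace. For such $v$ we have $\Pi_{n,2}v=v$, so
\begin{equation*}
Mv=Q\Pi_{n,2}R^{\mathsf T}v-\omega v .
\end{equation*}
Two preliminary reductions come first. Since $Y^{\mathsf T}=-Y$ and the other Paulis are symmetric, $R^{\mathsf T}=\pm R$, and the sign can be absorbed into $\omega$. Next, $M^\dagger=R^{*}\Pi_{n,2}Q^{\dagger}-\bar\omega\Pi_{n,2}$ has the same form with the roles of $Q$ and $R$ exchanged, again up to signs absorbed into $\omega$, and it has the same norm. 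So we may assume that $R$ is \emph{not} collective, or else that $R$ is collective and $Q$ is not. In the latter case, $R$ commutes with $\Pi_{n,2}$, so $M=(QR'-\omega)\Pi_{n,2}$ with $R':=\pm R$ collective, and $QR'$ is, up to phase, a noncollective Pauli string. This reduces the second case to the first, with $R$ replaced by the identity on the right and a noncollective string on the left. The adjoint trick then moves it to the right. Hence it suffices to treat $M=Q\Pi_{n,2}R-\omega\Pi_{n,2}$ with $R$ noncollective.

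Since $R$ is not collective, there are two sites carrying different single-qubit Paulis. I will apply a global single-qubit Clifford $C^{\otimes n}$ to choose a local basis adapted to $R$. Conjugation by $C^{\otimes n}$ commutes with $\Pi_{n,2}$, preserves operator norms, and maps Pauli strings to Pauli strings up to phase. There are two subcases.

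\textbf{Subcase (a).} In the chosen basis, $R$ has an off-diagonal factor ($X$ or $Y$) on a nonempty proper subset $S$ of sites. Take $v=|0^n\rangle$. Then $Rv$ is, up to phase, $|1_S0_{\bar S}\rangle$, and $\Pi_{n,2}Rv=\binom{n}{|S|}^{-1/2}|D_n^{|S|}\rangle$. Applying $Q$ gives, up to phase, a computational-basis superposition supported on weights other than zero, unless $Q$ flips exactly the bits needed. I will check orthogonality to $v$ by considering the Hamming weight. If $Q\Pi_{n,2}Rv\perp v$, then
\begin{equation*}
\|Mv\|^2=1+\|\Pi_{n,2}Rv\|^2\geq1.
\end{equation*}
If $Q$ maps part of the Dicke state back onto $|0^n\rangle$, its overlap with $v$ is at most $\binom{n}{|S|}^{-1/2}\leq n^{-1/2}$. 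Then
\begin{equation*}
\|Mv\|^2\geq\bigl(1-|\langle v,Q\Pi_{n,2}Rv\rangle|\bigr)^2+\ldots ,
\end{equation*}
where the omitted term is the squared norm of the component orthogonal to $v$. I expect a Pythagorean split to give at least $1-1/n$.

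\textbf{Subcase (b).} In every adapted basis, $R$ is diagonal, say $R=Z_S$ with $S$ a nonempty proper subset. This is reached after choosing $C$ so that the two distinct Paulis become $I$ and $Z$, or $Z$ and some other diagonal form. Here I will test on the $W$ state $v=|D_n^1\rangle$. We get $Rv=n^{-1/2}\sum_i(\pm1)|e_i\rangle$, so $\Pi_{n,2}Rv=\frac{n-2|S|}{n}\,v$. Thus $\|\Pi_{n,2}Rv\|\leq1-2/n$. Then
\begin{equation*}
\|Mv\|\geq1-\Bigl|\tfrac{n-2|S|}{n}\Bigr|.
\end{equation*}
This is not quite $\sqrt{1-1/n}$, so I would combine it with a second test vector, for example $|0^n\rangle$ or a superposition of $|D_n^0\rangle$ and $|D_n^1\rangle$. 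On such a vector, the diagonal $R$ and the Pauli $Q$ act with conflicting phases, and one optimizes a $2\times2$ compression of $M$.

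The main obstacle is subcase (b) and the mixed situations in which $Q$ partially undoes $R$. These are exactly where the constant $\Delta_n=\sqrt{1-1/n}$ must emerge tightly rather than merely some positive constant. My first attempt there will be to compute the compression of $M$ onto $\operatorname{span}\{|D_n^0\rangle,|D_n^1\rangle\}$ (together with the $Q$-images) explicitly, using Dicke-state matrix elements of single-site Paulis. The amplitude $\sqrt{(n-1)/n}$ naturally appears as the weight of $X_i|D_n^1\rangle$ outside the symmetric subspace. I will then show that the smallest possible norm of this $2\times2$ (or $3\times3$) block is at least $\Delta_n$.
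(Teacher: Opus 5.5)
Your subcase (a) is, once the basis is fixed, exactly the paper's argument: the test vector $|0_h\rangle^{\otimes n}$, a Dicke-amplitude bound on the overlap, and a direct norm expansion. The gap is subcase (b), which you leave open. There your $W$-state test only gives $\|Mv\|\geq1-|n-2|S||/n$, which equals $2/n$ when $|S|=1$, and the proposed $2\times2$ compression is never carried out.

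The missing idea is that subcase (b) never arises, because the local basis is yours to choose. Write $R=r_1\otimes\cdots\otimes r_n$ up to phase. The commutation signs with $(X,Y,Z)$ separate the four one-qubit Paulis: $I\mapsto(+,+,+)$, $X\mapsto(+,-,-)$, $Y\mapsto(-,+,-)$, $Z\mapsto(-,-,+)$. So if $r_i\neq r_j$, some $h\in\{X,Y,Z\}$ commutes with one of them and anticommutes with the other. In the eigenbasis of $h$, a factor is off-diagonal exactly when it anticommutes with $h$. Hence $R$ is off-diagonal on a nonempty proper subset of sites, and subcase (a) applies to every noncollective $R$. For example, $R=Z_S$ is handled by $v=|+\rangle^{\otimes n}$, since $Rv=|-\rangle_S|+\rangle_{\bar S}$; no compression is needed.

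Two further repairs are needed.

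First, in subcase (a) an overlap bound of $\binom{n}{|S|}^{-1/2}$ is too weak. Set $k=|S|$, $N_k=\binom nk$ and $o:=\langle v|Q\Pi_{n,2}Rv\rangle$. Then exactly $\|Mv\|^2=1+N_k^{-1}-2\operatorname{Re}(\bar\omega o)$, so $|o|\leq N_k^{-1/2}$ only yields $(1-N_k^{-1/2})^2$. You need $|o|=|\langle Q^\dagger v|\Pi_{n,2}Rv\rangle|\leq N_k^{-1/2}\cdot N_k^{-1/2}=N_k^{-1}$. This holds because $\|\Pi_{n,2}Rv\|=N_k^{-1/2}$, and the basis state $Q^\dagger v$ has amplitude $0$ or $N_k^{-1/2}$ in $|D_n^k\rangle$. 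It gives $\|Mv\|^2\geq1-N_k^{-1}\geq1-1/n$ directly, with no Pythagorean split.

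Second, your case reduction omits the case where $Q\neq R$ are both collective. Then $QR^{\mathsf T}$ is a phase times $p^{\otimes n}$ with $p\in\{X,Y,Z\}$. This is collective, so the case does not reduce to the noncollective one and needs its own argument. On the symmetric subspace, $p^{\otimes n}$ has eigenvalue $+1$ on $|0_p\rangle^{\otimes n}$ and $-1$ on the $p$-basis $W$ state. Hence $\|M\|\geq\max\{|1-\omega'|,|1+\omega'|\}\geq\sqrt2\geq\Delta_n$, where $\omega'$ absorbs the phase.
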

\begin{proof}
Suppose first that $R$ is not collective.  Up to an overall phase,
write $R^{\mathsf T}=r_1\otimes\cdots\otimes r_n$ with
$r_j\in\{I,X,Y,Z\}$.  Since the factors are not all equal, there
is an $h\in\{X,Y,Z\}$ that commutes with some $r_j$ and
anticommutes with others.  Indeed, the three commutation signs with
$X,Y,Z$ distinguish the four one-qubit Pauli strings.

Choose normalized eigenvectors $|0_h\rangle,|1_h\rangle$ of $h$
with eigenvalues $+1,-1$, respectively, and put
$|x_h\rangle:=|0_h\rangle^{\otimes n}$.  This state is symmetric,
so $\Pi_{n,2}|x_h\rangle=|x_h\rangle$.  The state
$R^{\mathsf T}|x_h\rangle$ is, up to phase, a product of these
eigenvectors with $k$ factors equal to $|1_h\rangle$, where
$1\leq k\leq n-1$.  Write $N_k:=\binom nk$ and define the
Dicke state in this basis by
\begin{equation*}
 |D_n^{k,h}\rangle:=\frac1{\sqrt{N_k}}
   \sum_{\substack{z\in\{0,1\}^n\\|z|=k}}
   |z_1\rangle_h\otimes\cdots\otimes|z_n\rangle_h.
\end{equation*}
Permutation averaging therefore gives, for some real $\theta$,
\begin{equation*}
 \Pi_{n,2} R^{\mathsf T}|x_h\rangle
   =\frac{e^{i\theta}}{\sqrt{N_k}}|D_n^{k,h}\rangle,
 \qquad
 \|\Pi_{n,2} R^{\mathsf T}|x_h\rangle\|^2=\frac1{N_k}.
\end{equation*}
Likewise, $Q^\dagger|x_h\rangle$ is an $h$-basis product state
up to phase.  Its overlap with $|D_n^{k,h}\rangle$ has magnitude
either zero or $1/\sqrt{N_k}$.  Hence
$|\langle x_h|Q\Pi_{n,2} R^{\mathsf T}|x_h\rangle|\leq1/N_k$.
Expanding the squared norm gives
\begin{align*}
 &\|(Q\Pi_{n,2} R^{\mathsf T}-\omega\Pi_{n,2})|x_h\rangle\|^2\\
 &\quad=1+\frac1{N_k}
  -2\operatorname{Re}\!\left[
    \overline\omega\langle x_h|Q\Pi_{n,2} R^{\mathsf T}|x_h\rangle
   \right]\\
 &\quad\geq1-\frac1{N_k}\geq1-\frac1n.
\end{align*}
The final inequality uses $\binom nk\geq n$ for
$1\leq k\leq n-1$.  This proves the claimed operator-norm bound
whenever $R$ is noncollective.  If $Q$ is noncollective, take the
adjoint, which preserves the norm, and apply the same argument with
$Q^\dagger$ as the right Pauli factor.

It remains to consider distinct collective strings $Q$ and $R$.
They commute with $\Pi_{n,2}$, and their product, including the transpose,
is a phase times $p^{\otimes n}$ for a nonidentity
$p\in\{X,Y,Z\}$.  Absorb this phase into $\omega$.  On the
symmetric subspace, $p^{\otimes n}$ has eigenvalue $+1$ on
$|0_p\rangle^{\otimes n}$ and eigenvalue $-1$ on
$|D_n^{1,p}\rangle$.  Consequently, its distance there from any
scalar phase obeys
\begin{align*}
 \|p^{\otimes n}\Pi_{n,2}-\omega\Pi_{n,2}\|
 &\geq\max\{|1-\omega|,|-1-\omega|\}\\
 &\geq\sqrt2\geq\Delta_n.
\end{align*}
This completes all cases.
\end{proof}

For later use, define the vectorization of an operator $M$ by
$\operatorname{vec}(M):=\sum_{x,y}M_{y,x}|y\rangle_A|x\rangle_{A'}$.
Here $A'$ is a reference system with the same dimension as the data
system $A$.  The elementary identity
\begin{equation*}
 (Q\otimes R)\operatorname{vec}(M)
   =\operatorname{vec}(QMR^{\mathsf T})
\end{equation*}
turns two-sided Pauli symmetries into stabilizers of a pure state.
For the exact qubit projector, Lemma~\ref{lem:pauli-symmetry-gap}
implies that only the four matched collective Pauli labels can
stabilize $\operatorname{vec}(\Pi_{n,2})$.  All four occur, with the
stabilizing phases given by
\begin{align*}
 \operatorname{Stab}(\operatorname{vec}(\Pi_{n,2}))
 =\bigl\{&I^{\otimes 2n},\ X^{\otimes 2n},\\
         &Z^{\otimes 2n},\ (-1)^nY^{\otimes 2n}\bigr\}.
\end{align*}
Thus its nullity is exactly $2n-2$.  To treat approximation, we
next use the quantitative gap to bound the stabilizers of the
\emph{implemented} block directly.

\subsection{Robust lower bound for block encodings}
\label{subsec:robust-lower-bound}

For a unitary $U$ acting on the data and $m$ signal qubits, denote
its accepted block by
\begin{equation*}
 K_U:=(\langle0^m|\otimes I_A)U(|0^m\rangle\otimes I_A).
\end{equation*}
An $(\alpha,m,\epsilon)$ block encoding of $\Pi_{n,d}$ satisfies
\begin{equation}\label{eq:lower-block-condition}
 \|\Pi_{n,d}-\alpha K_U\|\leq\epsilon,
 \qquad 0<\alpha<\infty.
\end{equation}
For qudits, use the standard binary encoding of
Sec.~\ref{subsec:projection-synthesis}; the norm condition is imposed
on the valid encoded data subspace.  The signal qubits include all
qubits whose all-zero outcome selects the block.  In addition, an
implementation $V$ may use any number of work qubits in a pure
stabilizer state $|s\rangle$, provided that they are returned
unchanged and uncorrelated:
\begin{equation*}
 V(|\psi\rangle\otimes|s\rangle)
   =U|\psi\rangle\otimes|s\rangle
 \quad\text{for every input }|\psi\rangle.
\end{equation*}
The approximation is in the encoded block; $V$ denotes the actual
finite Clifford+T circuit.

\lowerbound*
\begin{proof}
First let $d=2$, set $M:=\alpha K_U$, and let $t:=T(V)$.
Prepare the stabilizer state
$|0^m\rangle\otimes|\Phi_n\rangle_{AA'}\otimes|s\rangle$,
where each data qubit is Bell-paired with its reference qubit.
Apply $V$ to the signal, data, and work qubits, leaving $A'$ idle,
and postselect the signal onto zero.  The unnormalized remaining
state is
\begin{equation*}
 2^{-n/2}\operatorname{vec}(K_U)_{AA'}\otimes|s\rangle.
\end{equation*}
Since $\|\Pi_{n,2}\|=1$ and $\epsilon<1$, the approximation condition
ensures $M\neq0$, so this outcome has nonzero probability.
Its normalized state on $AA'$ is
\begin{equation*}
 |\Xi_M\rangle:=\frac{\operatorname{vec}(M)}{\|M\|_{\mathrm F}}
   =\frac{\operatorname{vec}(K_U)}{\|K_U\|_{\mathrm F}}.
\end{equation*}
The clean stabilizer work factor does not affect nullity.
Lemma~\ref{lem:nullity-postselection} therefore gives
\begin{equation}\label{eq:choi-block-nullity}
 t\geq\nu_s(\Xi_M).
\end{equation}

Take any Pauli stabilizer of $|\Xi_M\rangle$ and write it as
$\gamma Q_A\otimes R_{A'}$, where $Q,R\in\mathcal Q_n$ and
$\gamma\in\{1,-1,i,-i\}$.  Vectorization implies
$QMR^{\mathsf T}=\omega M$ with $\omega=\gamma^{-1}$.
Using this exact relation for the implemented block, together with
Eq.~\eqref{eq:lower-block-condition}, yields
\begin{align*}
 &\|Q\Pi_{n,2} R^{\mathsf T}-\omega\Pi_{n,2}\|\\
 &\quad=\|Q(\Pi_{n,2}-M)R^{\mathsf T}+\omega(M-\Pi_{n,2})\|\\
 &\quad\leq\|Q(\Pi_{n,2}-M)R^{\mathsf T}\|+\|M-\Pi_{n,2}\|\\
 &\quad=2\|M-\Pi_{n,2}\|\leq2\epsilon<\Delta_n.
\end{align*}
Lemma~\ref{lem:pauli-symmetry-gap} then forces
$Q=R=\sigma^{\otimes n}$ for some $\sigma\in\{I,X,Y,Z\}$.
A nonzero state has at most one stabilizing phase for each Pauli
string.  There are only four possible strings, and hence
\begin{equation*}
 |\operatorname{Stab}(\Xi_M)|\leq4,
 \qquad \nu_s(\Xi_M)\geq2n-2.
\end{equation*}
Combining this with Eq.~\eqref{eq:choi-block-nullity} proves the
qubit bound.  This step uses operator-norm error before normalizing
the Choi vector; it does not require a continuity bound for nullity
or a lower bound on the postselection probability.

For general $d\geq2$, each copy occupies $q:=\lceil\log d\rceil$
qubits.  Define the isometry $J$ by
$J|b\rangle:=|0^{q-1}b\rangle$ for $b\in\{0,1\}$, and let
$J_n:=J^{\otimes n}$.  Both encoded levels are valid for every
$d\geq2$.  Permuting the copies preserves this subspace and
commutes with applying the same embedding to each copy.  Averaging
over permutations therefore gives
\begin{equation*}
 \Pi_{n,d}J_n=J_n\Pi_{n,2},
 \qquad J_n^\dagger\Pi_{n,d}J_n=\Pi_{n,2}.
\end{equation*}
Regard the $n(q-1)$ padding qubits as additional signal qubits,
initialized and postselected in zero.  The same circuit, with only
a relabeling of wires, has accepted block
$K_2:=J_n^\dagger K_UJ_n$ on $n$ qubits and satisfies
\begin{align*}
 \|\Pi_{n,2}-\alpha K_2\|
 &=\|J_n^\dagger(\Pi_{n,d}-\alpha K_U)J_n\|\leq\epsilon.
\end{align*}
The restriction does not increase the T-count.  Applying the qubit result to
this $(\alpha,m+n(q-1),\epsilon)$ block encoding proves the
claim for all $d\geq2$.
\end{proof}

For every fixed $0\leq\epsilon_0<1/2$, the theorem applies uniformly
to $0\leq\epsilon\leq\epsilon_0$ whenever
$n>1/(1-4\epsilon_0^2)$.  Indeed, this condition is equivalent to
$2\epsilon_0<\sqrt{1-1/n}$.  Thus the lower bound remains
$2n-2=\Omega(n)$ at any fixed error strictly below $1/2$.
For fixed $0<\epsilon<1/2$, combining it with
Theorem~\ref{thm:symmetric-projection-synthesis} places the minimum
T-count for normalization-one block encodings between
$\Omega(n)$ and $\widetilde O(n)$, where the upper bound suppresses
logarithmic factors in $n$, $d$, and $1/\epsilon$.
The dependence on the number of input copies is therefore optimal
up to logarithmic factors.

This conclusion concerns implementing the accepted symmetric projector with uniform operator-norm accuracy.  A lower bound for
arbitrary CPTN purification maps specified only by their action on
noisy product inputs, or by their probability--fidelity trade-off,
would require an additional reduction to this operator-synthesis task.

\section{Summary}\label{sec5}

We develop a Clifford+T synthesis framework for
symmetric projector purification protocols under depolarizing noise. Dicke
masks arranged on a balanced binary tree specify uniform permutation
instructions, and reversible sorting networks coherently route the input copies. Combining these components with LCU yields a normalization-one block encoding of the symmetric projector.
For $n\geq2$ copies with local dimension $d\geq2$ and any
$0<\epsilon<1/2$, the circuit achieves operator-norm error at most
$\epsilon$ with T-count $\widetilde O(n)$ and $O(n\log n)$ auxiliary
qubits. The suppressed factors are logarithmic in $n$, $d$, and
$1/\epsilon$, giving only logarithmic dependence on the local dimension.

Using the stabilizer nullity of the postselected state together with a quantitative Pauli symmetry gap, we proved that any block encoding
with $\epsilon<\tfrac12\sqrt{1-1/n}$ has a T-count of at least $2n-2$.
This bound holds for all $d\geq2$, arbitrary positive
normalization, and any number of clean stabilizer work qubits.
Taken together, the upper and lower bounds establish near-optimal scaling with $n$ for synthesizing the symmetric projector at any fixed
$0<\epsilon<1/2$, where the lower bound applies for sufficiently large $n$.

Future work may reduce the logarithmic overhead and auxiliary space
through improved Dicke-state preparation and permutation routing.
Sharper lower bounds in $d$ and $1/\epsilon$ would clarify the
remaining parameter dependence.  Extending the lower bound to
arbitrary purification maps specified by a probability--fidelity
trade-off also remains open: the present optimality result concerns
uniform operator-norm approximation of the symmetric projector.
Such an extension would connect operator synthesis to the resources
required by the purification task itself.

\section{Acknowledgment}
The authors thank Longcheng Li, Jiaqing Jiang, Zhiyu Xia and Jialin Zhang for helpful discussions. 

\FloatBarrier
\sloppy
\bibliographystyle{apsrev4-1} 
\bibliography{ref}

\end{document}